\documentclass[a4paper, 11pt, headings = small, abstract,numbers=endperiod]{scrartcl}

\usepackage{amscd,amsmath,amsthm,array,bbm,hhline,mathrsfs,dsfont,changes, booktabs,fancybox,calc,textcomp,xcolor,graphicx,bbm,xspace,nicefrac,stmaryrd,url,arcs,listings,nameref}
\usepackage{enumitem}

\usepackage[title]{appendix}
\usepackage[semibold]{libertine}
\usepackage[upint]{libertinust1math}
\usepackage[scr=boondoxupr, scrscaled = 1.0]{mathalpha}
\usepackage{soul}

\usepackage[colorlinks=true, allcolors=myteal]{hyperref}
\definecolor{WIMgreen}{RGB}{60 134 132}
\definecolor{red_pers}{RGB}{204 37 41}
\definecolor{UMblue}{RGB}{4 47 86}
\definecolor{myteal}{RGB}{0 123 137}
\definecolor{dartmouthgreen}{rgb}{0.05, 0.5, 0.06}\definecolor{cobalt}{rgb}{0.0, 0.28, 0.67}\definecolor{coolblack}{rgb}{0.0, 0.18, 0.39}
\definecolor{glaucous}{rgb}{0.38, 0.51, 0.71}\definecolor{hooker\'sgreen}{rgb}{0.0, 0.44, 0.0}\definecolor{lemonchiffon}{rgb}{1.0, 0.98, 0.8}\definecolor{oucrimsonred}{rgb}{0.6, 0.0, 0.0}\definecolor{radicalred}{rgb}{1.0, 0.21, 0.37}\definecolor{raspberry}{rgb}{0.89, 0.04, 0.36}\definecolor{royalazure}{rgb}{0.0, 0.22, 0.66}
\definecolor{dex}{RGB}{138 18 34}
\definecolor{darkgreen}{RGB}{0 69 0}
\definecolor{darkblue}{RGB}{0 0 99}

\usepackage[english]{babel}								
\usepackage[numbers, square]{natbib}

\usepackage{mathtools}

\newcommand{\R}{\mathbb{R}}

\newcommand{\N}{\mathbb{N}}
\newcommand{\E}{\mathbb{E}}
\newcommand{\A}{\mathcal{A}}
\newcommand{\p}{p_{max}}

\newcommand{\Ex}{\mathbb{E}}
\newcommand{\PR}{\mathbb{P}}

\newcommand{\abs}[1]{ \left \vert #1 \right \vert }

\newcommand{\zm}{\tilde\zeta_{max}}

\newcommand{\NN}{1}

\theoremstyle{definition}

\newtheorem*{remark*}{Remark}
\newtheorem{theorem}{Theorem}[section]
\newtheorem{proposition}[theorem]{Proposition}
\newtheorem{remark}[theorem]{Remark}

\newtheorem{lemma}[theorem]{Lemma}

\newtheorem{corollary}[theorem]{Corollary}

\title{Optimal long-run control of endemic infections: bang–bang threshold policies in a stochastic SIS model}
\author{Matthew Buckland, Sören Christensen, Philip Le Borne, Cornelia Pokalyuk}
\date{\today}
\begin{document}
\maketitle
\begin{abstract}
We study long-run optimal intervention strategies for endemic infections in a stochastic susceptible–infected–susceptible (SIS) model. 
The proportion of infected individuals evolves as a diffusion process with random fluctuations, while a control variable $\zeta_t\in[0,{\color{black}\tilde{\zeta}_{max}}]$ represents the intensity of public health interventions that reduce transmission for some intervention threshold $\zm\in (0,1]$. 
The objective is to minimize the long-run average societal cost, balancing the burden of infection against the costs of interventions. 
Under a concave intervention cost structure
  the problem can be formulated as an ergodic stochastic control problem, whose structure implies (under certain additional conditions) that optimal interventions are of bang--bang type,
  switching between no intervention and the maximal admissible intervention 
  at a single switching threshold in the infection level. 
We construct candidate value functions, rigorously verify optimality in this single-threshold case, and relate the results to extinction and persistence properties of the underlying SIS dynamics in the absence of control.  
In our framework, the analysis provides a rigorous justification for the threshold-based intervention rules commonly used in epidemic management.
\end{abstract}

\section{Introduction}

Public health agencies often face situations in which eradication of a pathogen is infeasible or prohibitively costly. 
In such cases, the central question is not how to eliminate a disease but how to live optimally with it; more precisely, 
how to balance the long-run burden of infection against the economic and social costs of interventions. 
In practice, public health measures are often applied according to threshold-type rules, 
tightening restrictions when prevalence rises and relaxing them when it falls. 
Such rules implicitly balance intervention costs against expected epidemiological benefit, 
yet rigorous justifications for their use in endemic settings remain limited. 

In deterministic epidemic models, optimal control problems often lead to complex feedback laws that are difficult to interpret biologically. 
However, for certain deterministic SIR models with suitable cost structures, very simple strategies turn out to be optimal: 
so-called bang--bang policies, where intervention is either maximal or minimal depending on whether a threshold is crossed 
\cite{behncke2000optimal,KruseStrack2020SIRoptimal,gaff2009optimal,hansen2011optimal}.

In this paper, we show that similarly simple threshold-type optimal controls also arise in an endemic setting, 
when stochastic fluctuations and long-run performance criteria are taken into account.

\paragraph{Modeling framework.}
We address this question within a stochastic susceptible–infected-\newline susceptible (SIS) framework, a standard model for endemic infections (see \cite{hethcote2000mathematics}). 
We model the proportion $I_t\in(0,\NN)$ of infected individuals as a diffusion with multiplicative noise, and allow a population–level intervention $\zeta_t\in[0,\zm]$, where $\zm \in (0,1]$ represents the maximal allowed intervention, that reduces the effective contact rate. 
The control $\zeta_t$ is intentionally generic (non–pharmaceutical measures, treatment effort, vaccination/boosting intensity). 
To capture the endemic perspective, we use an ergodic performance criterion and minimize the long–run average cost
\[
\limsup_{T\to\infty}\frac1T\,\E\!\left[\int_0^T C(I_t,\zeta_t)\,dt\right], 
\qquad C(x,\zeta)=c_1(x)+(\NN-x)c_2(\zeta),
\]
where $c_1$ penalizes prevalent infection and $(\NN-x)c_2(\zeta)$ encodes the population-level costs of maintaining intervention measures.
We interpret the intervention cost as the effective societal effort required to maintain a given level of transmission reduction.
Once basic intervention infrastructure is established, we here consider the case that intensification entails relatively smaller additional effort up to some maximal intervention threshold $\tilde{\zeta}_{max}$,
motivating a concave dependence of intervention cost on control intensity,
which, as we show below, leads to bang--bang optimal policies.

Deterministic SIS models have been studied extensively (see \cite{hethcote2000mathematics}). 
Stochastic SIS dynamics better capture random fluctuations and extinction/persistence phenomena; in particular, \cite{gray2011stochastic} provide sharp conditions for exponential extinction versus the existence of a stationary endemic distribution. 
Optimal control of endemic diseases is less developed than control of SIR dynamics. In particular,
 stochastic control formulations—and especially the ergodic (long–run average) regime—are not well understood.
Methodologically, we build on the theory of controlled diffusion processes and ergodic control \cite{arapostathis2012ergodic}, 
based on tools from diffusion process theory.

Our work complements recent stochastic SIS control studies considering cost functions with discounting or cost accumulating on finite–time horizons (e.g.\ \cite{GranditsEtAl,TranYin}). 
A related approach is \cite{FedericoFerrari2021}, who analyze lockdown policies when the transmission rate follows a diffusion and can be reduced through costly intervention. 
In contrast, we control the infection process directly and characterize such threshold behavior within an ergodic framework.

\paragraph{Main contributions.}
We analyze an ergodic stochastic control problem for an endemic SIS diffusion
and obtain a rigorous characterization of optimal long--run intervention policies.
Our main contributions are as follows.
\begin{enumerate}
\item \textbf{Bang–bang threshold structure of optimal ergodic controls.}
Under a concave intervention cost and a natural monotonicity condition on the
total running cost, we show that the optimal ergodic feedback control is of
bang--bang type and characterized by a single switching threshold in the
infection level.

\item \textbf{Construction and verification of optimal policies.}
We construct explicit candidate value functions solving the ergodic
Hamilton--Jacobi--Bellman variational inequality and provide a rigorous
verification theorem establishing global optimality of the associated
threshold policy.

\item \textbf{Connection to extinction and persistence.}
We analyze the uncontrolled and fully controlled SIS dynamics as benchmark cases,
translate the extinction--persistence results of \cite{gray2011stochastic}
into our notation, and use the corresponding ergodic costs as reference values
for the optimal control problem.

\item \textbf{Qualitative insights and limiting regimes.}
Using admissibility and value comparison arguments, we study how optimal
intervention behaves in limiting regimes, including transmission and noise
levels close to the persistence threshold and extreme intervention costs.
\end{enumerate}

\paragraph{Interpretation for biology and policy.}
 The threshold structure has a transparent biological interpretation. 
When prevalence is low, the marginal benefit of strong measures may be outweighed by their social and economic cost. Whereas when prevalence is high, the burden of infections rises and the potential gains from reducing the transmission becomes larger, so that after a certain infection threshold is reached, maximal admissible intervention becomes optimal. Thus, prevalence–triggered policies—common in practice—admit a principled, stochastic justification in our model. We briefly discuss robustness to observation noise and partial-information variants in  Section \ref{sec:qualitative}. 

For our results we assume the cost for intervention to be concave.  This assumption is in many real-world situations fulfilled, since for example with mass production individual costs (e.g. of masks or vaccines) decrease. However, concavity  might  only be fulfilled on an interval $[0, \zeta^\ast]$. For example the costs of contact restrictions could be concave as long as only a  restricted group of people is affected, but intervention costs  could be extremely increasing after some threshold is reached. In this case applying interventions on a level $\zeta>\tilde{\zeta}_{max}$ might not be justifiable, so that one decides not to consider this setting at all. In this case in our model one would  $\zm= \zeta^{\ast}$. If with this $\zm$ assumptions \eqref{A0} - \eqref{ass:c2-concave} (below) are fulfilled, our results imply that switching between  no intervention and maximal intervention of strength $\zm$ is optimal.

\paragraph{Organization.}
Section~\ref{sec:model} introduces the controlled SIS diffusion, the cost structure,
and the ergodic optimization problem, and analyzes constant control policies together
with their connection to extinction and persistence properties of the underlying dynamics. In particular, we distinguish two regimes: one in which sufficiently strong intervention drives the disease eventually to extinction, and another in which the infection persists for which the optimal policy takes a threshold (bang--bang) form.
Section~\ref{sec:hjb} derives the ergodic Hamilton--Jacobi--Bellman variational inequality,
constructs candidate value functions, and establishes the threshold structure of optimal
policies, culminating in a verification theorem.
Section~\ref{sec:qualitative} discusses qualitative implications of the model and analyzes several limiting
regimes in which the optimal control problem simplifies and robust conclusions can be drawn.
Section~\ref{sec: St-contr-eq} develops the stochastic control equations that underpin
the proof of the verification result.
Technical proofs and lengthy computations are collected in an appendix.

\medskip

\noindent\emph{Notation.}
We write $\gamma>0$ for the recovery rate, $\beta>0$ for the baseline contact rate,
$\sigma>0$ for the noise intensity, $$\tilde\zeta_{max}\in(0,1]$$ the maximal permitted intervention, and
$$\theta:=c_2(\zm)/(\zm\beta)$$ for the threshold parameter appearing in the switching condition.
We denote by $\A^\zeta$ the infinitesimal generator associated with a fixed control $\zeta\in[0,\zm]$.

\section{SIS epidemic model as a stochastic control problem}\label{sec:model}

We consider a \textit{susceptible--infected--susceptible} (SIS) model, in which, in a deterministic setting, the proportion of infected individuals evolves according to
\begin{align*}
dI^{0}_t=(\beta(\NN-I^{0}_t)I^{0}_t-\gamma I^{0}_t)dt.
\end{align*}

To capture exogenous variability in transmission, we model the contact rate as subject to random perturbations and write, formally,
\[
\beta_t \;=\; \beta(1-\zeta_t) + \sigma \,\xi_t,
\]
where $\zeta_t\in[0,\zm]$ is the (population–level) intervention intensity and
$\sigma\xi_t$ denotes Gaussian white noise of intensity $\sigma>0$.
Interpreting $\xi_t\,dt=dW_t$ in the Itô sense, the controlled infection process $I_t=I^\zeta_t$ is then postulated to solve the SDE
\begin{equation}\label{eq:SDE}
    dI_t \;=\; \Big(\beta(1-\zeta_t)(\NN-I_t)I_t - \gamma\,I_t\Big)\,dt
            \;+\; \sigma (\NN-I_t)I_t\, dW_t,
\end{equation}
on a filtered probability space supporting a one–dimensional Brownian motion $W$.

\paragraph{Admissible controls.}
A control $\zeta = (\zeta_t)_{t\geq 0}$ is called \textit{ admissible}, if it is progressively measurable
with respect to the filtration of $W$, takes values in $[0, \tilde \zeta_{max}]$ a.s., and ensures that \eqref{eq:SDE} admits a unique strong solution staying in $(0, 1)$ a.s. \\

\paragraph{Cost structure and performance criterion.}
We evaluate the chosen strategies $\zeta$ with a running cost
\[
C(x,\zeta)\;=\;c_1(x) + (\NN-x)\,c_2(\zeta),
\]
where $c_1\!:\,[0,\NN]\to\R_+$ penalizes prevalent infection and $c_2\!:\,[0,\zm]\to\R_+$ models intervention costs borne per susceptible. Assumptions on $c_1,c_2$ will  be specified in more detail later. 

Our objective is then to minimize the following ergodic costs over all admissible controls $(\zeta_t)_{t\geq 0}$:
\begin{align}\label{erg_crit}
    \limsup_{T\to\infty} \frac{1}{T}\Ex\left(\int_0^TC(I_t,\zeta_t)dt\right).
\end{align}

\paragraph{Uncontrolled 
SIS-epidemic and ergodic regime
.}

Setting $\zeta\equiv 0$ 
yields
\[
dI_t \;=\; \big(\beta(\NN-I_t)I_t - \gamma\,I_t\big)\,dt \;+\; \sigma (\NN-I_t)I_t\,dW_t,
\]

which matches the SDE studied in \cite{gray2011stochastic} upon identifying $\gamma=\mu+\tilde\gamma$.    
 For any $\beta\geq 0$  
$$\mathbf{P}(I_t \in (0,1) ~~ \forall t>0)=1,$$ if $I_0\in (0,1)$ a.s., see  \cite[Thm.~3.1]{gray2011stochastic} for the case $\beta>0$ and note that the proof can also be adapted to the case $\beta=0$.
Define the stochastic reproduction number
\[
R_0^{S}\;:=\;\frac{\beta}{\gamma}\;-\;\frac{\sigma^2}{2\gamma}.
\]
Then, if $R_0^{S}<1$ and $\sigma^2\le \beta$, the infection goes extinct exponentially fast almost surely, i.e. $$\limsup_{n\rightarrow \infty} \frac{1}{t} \log(I_t)\leq\frac{\beta}{\gamma}- \frac{\sigma^2}{2\gamma} ~~\text{a.s.}$$ \cite[Thm.~4.1]{gray2011stochastic}; for large noise $\sigma^2>\beta\vee \beta^{\,2}/(2\gamma)$, eventual extinction also occurs a.s., more precisely $\limsup_{t\rightarrow \infty} I_t=0$ a.s. \cite[Thm.~4.3]{gray2011stochastic}. 
If $R_0^{S}>1$, the process is persistent (fluctuates in $(0,\NN)$) and admits a unique stationary distribution \cite[Thm.~5.1 \& Thm.~6.2]{gray2011stochastic}, with explicit mean and variance \cite[Thm.~6.3]{gray2011stochastic}.%

\subsection{Extinction under constant intervention} \label{sbsec: extinction}
In this subsection we will discuss the expected long-term average cost if the infection goes eventually extinct, either under admissible control or due to large noise. 

 Setting $\zeta\equiv p \in (0,1)$ yields
\begin{align}
    dI_t^p \;=\; \big(\beta(1-p)(\NN-I_t^p)I_t^p - \gamma\,I_t^p\big)\,dt \;+\; \sigma (\NN-I_t^p)I_t^p\,dW_t.
\end{align}

We note that the setting with constant intervention of level $p \in (0,1)$ is identical to the setting of zero intervention with $\beta$ changed to $\beta(1-p)$. Therefore we can use the  existing literature, \cite{gray2011stochastic}, see the previous section, to determine the effect of constant interventions. In particular, if we define the stochastic reproduction number under $\zeta\equiv p$ 
\[
R_p^{S}\;:=\;\frac{\beta(1-p)}{\gamma}\;-\;\frac{\sigma^2}{2\gamma}.
\]
then it immediately follows that if $\sigma^2 < \beta_p := \beta(1-p)$ and $R_p^{S} < 1$ then the infection goes extinct exponentially fast almost surely. Finally, we observe that with constant full intervention $\zeta \equiv 1$ clearly extinction occurs, since we have $R_1^S < 1$ and $\sigma^2 > 0$. 
\par
Note that the condition  $R_p^{S} < 1$ is equivalent to 
\begin{equation}\label{eq p max}
    p > p_{max} := \frac{\beta - \gamma - \frac{\sigma^2}{2}}{\beta}\in (0,1).
\end{equation}

Now let $\zeta \equiv  p_{max}$. This case corresponds to the critical boundary $R_p^S = 1$ where extinction still occurs, i.e. $I_t \to 0$ a.s. for $t\to\infty$, but without exponential decay. In fact for all $t\geq 0$ the SDE has a unique solution $I_t \in (0,1)$ with probability one, c.f. \cite{gray2011stochastic} Theorem 3.1.
We provide details for this in Appendix \ref{appendix: extinction-pmax}.

For large noise $\sigma^2 > \beta_p \vee \beta_p^2/(2\gamma)$ the infection goes extinct. For this result we refer to \cite[Thm.~4.3]{gray2011stochastic}.  In  \cite{gray2011stochastic} it is further conjectured that if 
\[
R_p^S <1 \quad \text{ and } \quad \frac{\beta_p^2}{2\gamma} \geq \sigma^2 \geq \beta_p,
\]
then the disease will die out with probability one. This is however not proven. We will not pursue this case further. 

Let us now assume that we are in one of the cases discussed above, where $I_t \to 0$ a.s.
Since $c_1$ is continuous with $c_1(0)=0$, we have 
$c_1(I_t) \to 0$ a.s.\ and $|c_1(I_t)| \le \sup_{[0,1]}|c_1|<\infty$. 
A standard Cesàro argument then yields
\[
\frac{1}{t}\int_0^t c_1(I_s)\,ds \;\longrightarrow\; 0 
\quad \text{a.s.\ as } t\to\infty.
\]
Since moreover
\[
\left|\frac{1}{t}\int_0^t c_1(I_s)\,ds\right|
\le \sup_{[0,1]}|c_1|=c_1(1),
\]
an application of the dominated convergence theorem gives
\[
\frac{1}{t}\,\E\!\left[\int_0^t c_1(I_s)\,ds\right] \;\longrightarrow\; 0
\quad \text{as } t\to\infty.
\]
Now, if $\p\leq\tilde\zeta_{max}$,   the expected long-term average cost under the admissible control $\zeta \equiv \p$ 
satisfies
\begin{align}\label{eq: long-term-cost}
    \limsup_{t\to\infty} \frac{1}{t}\Ex\left(\int_0^tC(I_s^{\p },\p)ds\right) \leq \limsup_{t\to\infty}\frac{1}{t}\Ex\left(\int_0^t c_1(I_s^{\p}) +c_2(\p)ds\right) = c_2(\p).
\end{align}

\section{Threshold Structure and Verification of the Optimal Policy}\label{sec:hjb}

In this section we introduce conditions on the cost function in order to solve the ergodic control problem using the  Hamilton-Jacobi-Bellman (HJB) approach. Additionally we characterize the structure of optimal policies. We split this section into four subsections. 

Section \ref{subsect: HJB approach} formulates the HJB equation associated with the controlled SIS diffusion. By assuming the function representing the cost of the admitted control is concave, we show that the Hamiltonian is concave in the control variable. Hence minimizers are attained on the boundary of the set of admissible control, i.e. optimal control takes values in $\{0,\zm\}$. 

Section \ref{subsect: Charact bang-bang solution} constructs a candidate solution of the HJB equation under this bang-bang approach. By solving the associated differential equations on the two control regions and imposing a smooth–fit condition, we derive matching equations that determine the threshold $x_*$ and the corresponding ergodic cost $\eta$. 

Section \ref{subsect: HJB implies bang-bang} exploits this structure together with the discussion in Section \ref{sbsec: extinction}   to prove that the region where full intervention is optimal is of the form $[x_*,1]$, for some $x_*\in (0,1)$. This implies that the optimal feedback control is a single threshold, i.e. bang-bang control. 

Section \ref{subsect: Bang-Bang opt} states and proves our main result, the verification theorem.

We make the following assumptions

\begin{enumerate}[label=(A\arabic*), ref=A\arabic*] \itemsep=2pt
\item \label{A0}
The threshold $\tilde \zeta_{max} $ must be large enough. More precisely, we assume that
for \[
    B^* := \begin{cases}
        \sqrt{2\gamma\sigma^2}, &\sigma^2 > 2\gamma,\\
        \gamma + \frac{1}{2}\sigma^2, &\sigma^2 \leq 2\gamma
    \end{cases}
    \] we have 
    \[
    \tilde \zeta_{max} > \left(1 - \frac{B^*}{\beta}\right)\vee \p.
    \]
\item \label{A1}
Cost of infections must be sufficiently high and sufficiently fast increasing.\\
For $x\in[0,1]$ let  
\[
H_{\tilde \zeta_{max} }(x) := \frac{1}{2}\sigma^2(1-x)^2 + \gamma - \beta(1-\tilde \zeta_{max})(1-x). 
\] and 
    \[
    m_{\tilde H} = \inf_{x\in(0,1)}H_{\tilde \zeta_{max} }(x).
    \]
The cost functions $c_1$ and $c_2$ satisfy the following inequality 
\begin{align*}\label{new A1}
    c_1(1) &> c_2(\p) + \frac{\gamma}{\beta\zm} c_2(\zm)
\end{align*} 
    and
    \[
    \inf_{x\in (0,1)} c_1'(x) > c_2(\tilde \zeta_{max}) + \frac{\beta (1-\tilde \zeta_{max})}{m_{\tilde H}} \big(c_1(1) - c_2(\tilde \zeta_{max}) \big).
    \]

\item \label{A2}
The uncontrolled SIS diffusion admits an ergodic regime, i.e.

\[
\beta  > \gamma + \frac12\sigma^2.
\]

\end{enumerate}

 Assumption \eqref{A0} ensures that $m_{\tilde H} > 0$.
Furthermore, Assumption \eqref{A1} ensures that $C(\cdot, \zeta)$ is increasing for any strategy $\zeta$.

\subsection{Hamilton-Jacobi-Bellman approach} \label{subsect: HJB approach}
We consider the operator
\begin{align*}
\A^\zeta \phi(x)=(\beta(1-\zeta)(\NN-x)x-\gamma x)\phi'(x)+\frac{1}{2}\sigma^2(\NN-x)^2x^2 \phi''(x).
\end{align*}
A standard approach for solving ergodic problems such as \eqref{erg_crit} is by means of the Hamilton-Jacobi-Bellman equation (HJB). Hence we are looking for a solution pair $(\phi,\eta)$ such that $\phi$ is  piecewise $C^2$ {across regions $\{\zeta=0\}$, $\{\zeta=\zm\}$, and globally $C^1$
that fulfills the equation: 
\begin{align}\label{eq:HJB}
\min_{\zeta\in[0,\zm]} \A^\zeta\phi(x)+C(x,\zeta)-\eta=0
\end{align}
for all $x\in (0,\NN)$.

As discussed in the introduction, we assume throughout that
\begin{enumerate}[label=(A\arabic*), ref=A\arabic*]
\setcounter{enumi}{3}   
\item $c_2:[0,\zm]\to\mathbb{R}_+\ \text{is increasing and concave},\ 
c_1(0)=c_2(0)=0.$
\label{ass:c2-concave}
\end{enumerate}
Then the Hamiltonian is the sum of an affine term in $\zeta$ (from the drift) and the concave $(\NN-x)c_2(\zeta)$; hence it is concave in $\zeta$, and the minimizer is attained at the boundary $\zeta\in\{0,\zm\}$. Thus the HJB equation reduces to choosing between the two linear operators corresponding to $\zeta=0$ and $\zeta=\zm$. A short calculation shows that

\begin{align*}
   \left. \A^\zeta\phi(x)+C(x,\zeta)-\eta\right|_{\zeta = 0} \quad\lessgtr \quad \left. \A^\zeta\phi(x)+C(x,\zeta)-\eta\right|_{\zeta = \zm}
\end{align*}
 is equivalent to 
\begin{align*}
   \zm \beta x \phi'(x)  \lessgtr c_2(\zm).
\end{align*}
It is therefore natural to introduce
\begin{equation}\label{eq: bang bang set}
A := \big\{x\in(0,\NN): \beta x \phi'(x) \geq \frac{c_2(\zm)}{\zm} \big\},
\end{equation}
the region where full intervention $\zeta=\zm$ is optimal.

\subsection{Characterization of the bang-bang solution}\label{subsect: Charact bang-bang solution}

We  will now  heuristically find a solution candidate $(\phi,\eta)$ of the HJB \eqref{eq:HJB}. 
The main result will then be the verification Theorem  \ref{thm:main-bb}, which we state at the end of this section.

Under the assumption that we are in the bang-bang setting we have zero control below some threshold infection value $x_*\in(0,\NN)$ and total control above this threshold. We write the equations that determine $\phi$, $\eta$ of the HJB equation \eqref{eq:HJB} and the threshold $x_*$.

A standard calculation shows that, on any interval where $\zeta$ is constant, $\psi$ satisfies the first-order ODE
\begin{equation}\label{eq: aux ODE}
        \frac{1}{2}(\NN-x)^2\sigma^2 x \psi'(x)
        + \Big(-\frac{1}{2}(\NN-x)^2\sigma^2 - \gamma + \beta(1-\zeta)(\NN-x)\Big) \psi(x)
        + C(x,\zeta) - \eta = 0.
\end{equation}
We will first solve \eqref{eq: aux ODE} for $\psi$ and then recover $\phi$ by integration.

Set
\begin{align}\label{verif functions}
    \theta:= \frac{c_2(\tilde \zeta_{max})}{\beta\tilde \zeta_{max}}>0, \qquad
H(x):=\frac{1}{2}\sigma^2(1-x)^2 + \gamma - \beta(1-\tilde \zeta_{max})(1-x), \qquad
g(x):=\frac12\sigma^2(\NN-x)^2\,x,
\end{align}
 and for each $x\in(0,\NN)$ we define the kernels
\[
\begin{aligned}
K_-(y|x)&:=\frac{1}{g(y)}\exp\!\Big(\int_{z=y}^{x}\frac{H(z)-\beta(\NN-z)}{g(z)}\,dz\Big), && 0<y<x,\\[3pt]
K_+(y|x)&:=\frac{1}{g(y)}\exp\!\Big(-\int_{z=x}^{y}\frac{H(z)}{g(z)}\,dz\Big), && x<y<\NN,
\end{aligned}
\]
and the corresponding matching functions
\[
\eta_-(x):=\frac{\displaystyle \theta+\int_0^x c_1(y)K_-(y|x)\,dy}{\displaystyle \int_0^x K_-(y|x)\,dy},
\qquad
\eta_+(x):=\frac{\displaystyle -\theta+\int_x^\NN\!\big(c_1(y)+(\NN-y)c_2(\zm)\big)K_+(y|x)\,dy}{\displaystyle \int_x^\NN K_+(y|x)\,dy}.
\]
Further let
\begin{equation*}
     f_1(x) = \frac{1}{2} (\NN-x)^2 \sigma^2 + \gamma - \beta(1-\zeta^*(x))(\NN-x); \hspace{0.25cm} f_0(x) = -C(x, \zeta^*(x)) + \eta, 
\end{equation*}
with 
\begin{align*} 
\zeta^*(x)=
\begin{cases}  0 & x \leq x_*\\
\zm & x > x_*,
\end{cases}
\end{align*}
i.e. 
\begin{align*}
 f_0(x) = 
    \begin{cases}
        -c_1(x) + \eta, &\text{  for  } x \leq x_*,\\
        - c_1(x) - (1-x)c_2(\tilde \zeta_{max}) + \eta, &\text{  for  } x > x_*
    \end{cases}
\end{align*}
and 
\begin{align*}
    f_1(x) = 
    \begin{cases}
        \frac{1}{2} (\NN-x)^2 \sigma^2 + \gamma - \beta(\NN-x), &\text{  for  } x \leq x_*,\\
        \frac{1}{2}(1-x)^2\sigma^2 + \gamma - \beta(1-\tilde \zeta_{max})(1-x), &\text{  for  } x > x_*
    \end{cases}
\end{align*}
for all $x\in(0,\NN)$.

By \eqref{eq: bang bang set} we require that $\psi(x_*) = \theta$. Therefore, in the setting of \eqref{eq:HJB}, we can write the solution of \eqref{eq: aux ODE} as
\begin{equation*}
    \psi(x) = \theta e^{F(x)} + e^{F(x)} \int_{y=x_*}^x e^{-F(y)}\frac{f_0(y)}{g(y)} dy \hspace{0.25cm} \text{ where } \hspace{0.25cm} F(x) = \int_{y=x_*}^x \frac{f_1(y)}{g(y)} dy.
\end{equation*}

Hence we have a candidate solution  
    \begin{equation*}
        \phi(x) = \int_{y=x_*}^x \frac{1}{y} \Big( \theta e^{F(y)} + e^{F(y)} \int_{z=x_*}^y e^{-F(z)}\frac{f_0(z)}{g(z)} dz \Big) dy
    \end{equation*}
    of the HJB equation \eqref{eq:HJB}.
We now use this representation to select suitable values of $x_*$ and $\eta$ such that the induced feedback control is admissible and satisfies the global HJB variational inequality. We require
\begin{equation*}
    \int_{y=0}^{x_*} e^{-F(y)}\frac{f_0(y)}{g(y)} dy = \frac{c_2(\tilde \zeta_{max})}{\beta\tilde \zeta_{max}}=\theta
\end{equation*}
which can be written out as
\begin{equation}\label{eq: eta and x star 1}
     \int_{y=0}^{x_*} \frac{\eta - c_1(y)}{\frac{1}{2} (\NN-y)^2 \sigma^2 y} \exp\Big(\int_{z=y}^{x_*} \frac{\frac{1}{2} (\NN-z)^2 \sigma^2 + \gamma - \beta(\NN-z)}{\frac{1}{2} (\NN-z)^2 \sigma^2 z}dz\Big) dy = \theta.
\end{equation}
Hence by rearranging terms we have $\eta = \eta_-(x_*)$. Analogously, since we require 
\begin{equation*}
    \int_{y=x_*}^\NN e^{-F(y)}\frac{f_0(y)}{g(y)} dy = -\frac{c_2(\tilde \zeta_{max})}{\beta\tilde \zeta_{max}},
\end{equation*}
we also have    $\eta = \eta_+(x_*)$.

Then, since $\eta_-(x_*) = \eta = \eta_+(x_*)$,
we obtain the matching equation
\[
\eta_-(x_*) = \eta_+(x_*)
\]
for the unknown threshold $x_*$. For completeness, an explicit
integral representation of this equation is given in Appendix~\ref{app:matching},
see \eqref{eq:matching-xstar}.

\subsection{HJB solution implies bang-bang}\label{subsect: HJB implies bang-bang}
By \eqref{eq: long-term-cost} it is  reasonable  to only consider solutions $(\phi, \eta)$ of the HJB equation \eqref{eq:HJB} for which $\eta < c_2(\p)$ holds. Only such solutions have potential for a lower long term cost than using $\zeta_p$ with $p >\p $ as the control.

In this section, we prove in Lemma \ref{lm: single-crossing} that the region
\begin{equation}
A := \big\{x\in(0,\NN):  x \phi'(x) \geq \frac{c_2(\tilde \zeta_{max})}{\beta\tilde \zeta_{max}} \big\},\nonumber
\end{equation}
 where full intervention $\zeta=\zm$ is optimal, is of the shape 
 \[
 A = [x_* , 1]
 \]
 for some $x_*\in (0,1)$. It is convenient to introduce
\[
\psi(x):=x\phi'(x),\qquad x\in(0,\NN).
\]
Specifically, Assumption \eqref{A1} provides 
\begin{align*}
    c_1(1) > c_2(\p) + \frac{\gamma}{\beta\zm} c_2(\zm) 
\end{align*}
which is equivalent to 
\begin{align}
   \psi_1 :=  \frac{c_1(1) - c_2(\p)}{\gamma} > \frac{c_2(\zm)}{\zm\beta} = \theta. 
\end{align}

Note that if $C(x,\zeta)$ is increasing in $x$ for every $\zeta\in [0,\zm]$ and if $c_2$ is linear, it is then immediate that \eqref{A1} holds. %

Here we mention the limit behavior of $\psi$ which is shown in detail in Section \ref{sbsec: prp ODE} in Lemma \ref{limit_ineq}, in particular 
\[
-\infty < \lim_{x\downarrow 0} \psi(x) < 0 
\quad\text{and}\quad
0 < \lim_{x\uparrow \NN} \psi(x) < \infty.
\]



\begin{lemma}[Single crossing / monotonicity] \label{lm: single-crossing}
    Let $\phi$ be a $C^1$ function satisfying the HJB equation \eqref{eq:HJB},
and set $\psi(x)=x\phi'(x)$. Under the assumptions \eqref{A0}- \eqref{A2} and $\eta < c_2(\p)$  the function $\psi(x) = x\phi'(x)$ crosses the level $\theta_{\tilde \zeta_{max}}=\frac{c_2(\tilde \zeta_{max})}{\beta\tilde \zeta_{max}}\in(0,\NN)$ exactly once. 
\end{lemma}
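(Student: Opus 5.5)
The plan is to combine an intermediate value argument for existence with a transversality argument at crossing points for uniqueness. Both parts rest on the first-order ODE that $\psi=x\phi'$ satisfies.

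\textbf{Setting up the ODE.} Dividing \eqref{eq:HJB} by $x$ and writing it in terms of $\psi$ gives, on each region where the minimizing control is constant,
\[
g(x)\psi'(x)=H_\zeta(x)\psi(x)-C(x,\zeta)+\eta,\qquad H_\zeta(x):=\tfrac12\sigma^2(1-x)^2+\gamma-\beta(1-\zeta)(1-x).
\]
This is \eqref{eq: aux ODE}. By the discussion in Section~\ref{subsect: HJB approach}, the minimizing control is $\zeta=0$ where $\psi<\theta$ and $\zeta=\zm$ where $\psi>\theta$. At a point with $\psi=\theta$, both choices give the same value of the Hamiltonian, so both right-hand sides agree there.

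\textbf{Existence of a crossing.} I first show that $\psi$ reaches the level $\theta$ at least once.
\begin{enumerate}
\item Near $0$, Lemma~\ref{limit_ineq} gives $\lim_{x\downarrow0}\psi(x)<0<\theta$.
\item Near $1$, I would sharpen the limit statement. Since $g(x)\to0$ and $H_\zm(1)=\gamma$, the ODE forces
\[
\lim_{x\uparrow1}\psi(x)=\frac{c_1(1)-\eta}{\gamma}.
\]
This needs finiteness of the limit, which Lemma~\ref{limit_ineq} provides, plus a short argument that $g\psi'\to0$ along the limit.
\item Using $\eta<c_2(\p)$ and \eqref{A1}, this limit exceeds $\psi_1=(c_1(1)-c_2(\p))/\gamma>\theta$.
\end{enumerate}
Continuity of $\psi$, which holds since $\phi\in C^1$, and the intermediate value theorem then give at least one crossing of $\theta$.

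\textbf{Uniqueness via transversality.} Set $w:=\psi-\theta$. Near any zero of $w$, $w$ satisfies $g w'=H_\zeta w+h_\zeta(x)$, where
\[
h_\zeta(x)=H_\zeta(x)\theta-C(x,\zeta)+\eta .
\]
At a zero $x_0$ of $w$ we have $h_0(x_0)=h_\zm(x_0)=:h(x_0)$, so the sign of $w'(x_0)$ equals the sign of $h(x_0)$.

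The key step is to show that $h_\zm$ is strictly decreasing on $(0,1)$. Differentiating gives
\[
h_\zm'(x)=-\sigma^2(1-x)\theta+\beta(1-\zm)\theta+c_2(\zm)-c_1'(x)\le \beta(1-\zm)\theta+c_2(\zm)-c_1'(x).
\]
This is negative by the second inequality in \eqref{A1}, provided $\theta\le (c_1(1)-c_2(\zm))/m_{\tilde H}$. I expect that bound to come from an a priori estimate on $\psi$ on the region $\{\psi\ge\theta\}$. Specifically, the variation-of-constants formula with kernel $K_+$ should give $\psi\le \sup(C(\cdot,\zm)-\eta)/\inf H_\zm\le (c_1(1)-c_2(\zm))/m_{\tilde H}$ there, and $\theta$ is a value attained by $\psi$. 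Here \eqref{A0} is used to guarantee $m_{\tilde H}>0$.

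\textbf{Concluding the argument.} Since $h_\zm$ is strictly decreasing, it has at most one zero $\bar x$. Zeros of $w$ to the left of $\bar x$ are strict up-crossings, and zeros to the right are strict down-crossings. Because $w<0$ near $0$ and $w>0$ near $1$, the zeros of $w$ must alternate between up- and down-crossings and must end with an up-crossing. A down-crossing followed by a later up-crossing would require $h<0$ and then $h>0$ further to the right, which is impossible. Hence there is exactly one zero, and it is an up-crossing. A tangential zero at $x_0=\bar x$ is excluded by a local comparison: near $\bar x$, $g w'=H w+h$ with $h$ changing sign from $+$ to $-$, and one checks that $w$ cannot touch $0$ there without forcing another zero of the wrong type. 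This also yields the shape $A=[x_*,1)$.

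\textbf{Main obstacle.} I expect the hardest step to be the a priori bound $\theta\le (c_1(1)-c_2(\zm))/m_{\tilde H}$, i.e.\ turning the $m_{\tilde H}$ term in \eqref{A1} into exactly the estimate needed for $h_\zm'<0$. I would try a maximum-principle argument on $\{\psi>\theta\}$ first. At an interior maximum of $\psi$ there, $\psi'=0$, so $H_\zm\psi=C(\cdot,\zm)-\eta\le c_1(1)-c_2(\zm)$ after using $\eta\ge$ (an appropriate lower bound). Verifying that lower bound on $\eta$ may itself require care.
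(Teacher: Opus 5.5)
\textbf{There is a genuine gap at the key step.} Your existence argument and transversality set-up are fine. In fact $h_0\equiv h_{\zm}$ identically, since $H_{\zm}-H_0=\beta\zm(1-x)$ and $\beta\zm\theta=c_2(\zm)$. However, strict monotonicity of $h_{\zm}$ is not established, and the route you sketch for it cannot work.

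Since $\sup_x\big(C(x,\zm)-\eta\big)\ge c_1(1)-\eta$, your maximum-principle estimate gives $\theta\le (c_1(1)-c_2(\zm))/m_{\tilde H}$ only if $\eta\ge c_2(\zm)$. The hypotheses force the opposite: $\eta<c_2(\p)\le c_2(\zm)$, because $\p<\zm$ by \eqref{A0} and $c_2$ is increasing by \eqref{ass:c2-concave}.

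The inequality you need is false in general, and $h_{\zm}$ need not be monotone. Take $\beta=2$, $\gamma=\tfrac12$, $\sigma^2=1$, $\zm=0.99$, $c_2(\zeta)=\zeta$, $c_1(x)=0.995\,x$. Then $\p=\theta=\tfrac12$ and $m_{\tilde H}\approx 0.4998$, and all of \eqref{A0}--\eqref{ass:c2-concave} hold as printed. Yet $h_{\zm}'(x)=0.005-\tfrac12(1-x)>0$ for $x>0.99$.

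\textbf{How the paper handles this.} It needs no bound on $\theta$. It compares $\psi$ with the nullcline $\psi_{eq}=(C(\cdot,\zm)-\eta)/H_{\zm}$ of the $\zm$-equation and proves $\psi_{eq}'>0$ by the quotient rule. That proof uses $H_{\zm}\ge m_{\tilde H}$, $H_{\zm}'\le\beta(1-\zm)$ and the numerator bound $c_1(1)+c_2(\zm)$. This is where the $m_{\tilde H}$-term of \eqref{A1} belongs. Note that the proof actually uses \eqref{A1} with $+c_2(\zm)$. With the printed $-c_2(\zm)$, the same estimate would again need $\eta\ge c_2(\zm)$.

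\textbf{Repair.} Use the identity $h_{\zm}=H_{\zm}(\theta-\psi_{eq})$, with $H_{\zm}>0$ by \eqref{A0}. The sign of $\psi'$ at a crossing is then the sign of $\theta-\psi_{eq}$, which is strictly decreasing. Your alternation argument then goes through verbatim. This identity is exactly the link the paper's write-up leaves implicit: it only shows that $\psi-\psi_{eq}$ changes sign at most once, then concludes about crossings of $\theta$.

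Alternatively you can keep $h_{\zm}$. Under the $+c_2(\zm)$ form of \eqref{A1} you only need $\theta\le (c_1(1)+c_2(\zm))/m_{\tilde H}$. This follows from the first part of \eqref{A1}, $\theta<(c_1(1)-c_2(\p))/\gamma\le c_1(1)/\gamma$, together with $m_{\tilde H}\le\lim_{x\uparrow 1}H_{\zm}(x)=\gamma$. No a priori estimate on $\psi$ is required.
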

\begin{proof}
    Let 
    \[
H_{\tilde \zeta_{max} }(x) = \frac{1}{2}\sigma^2(1-x)^2 + \gamma - \beta(1-\tilde \zeta_{max})(1-x)., \qquad
g(x)=\frac12\sigma^2(\NN-x)^2\,x, \qquad x\in(0,\NN)
\] 
and let $\zeta^*(x) := \tilde \zeta_{max}1_{\{\psi(x)>\theta_{\tilde \zeta_{max}}\}}(x)$ the feedback control induced by $\phi$. By a simple calculation, the minimum of $H_{\tilde \zeta_{max} }(x)$ is attained at 
\[
x_{min} = 1 - \frac{\beta(1-\tilde \zeta_{max} )}{\sigma^2}.
\]
If $x_{min} \in [0,1]$ and since $H_{\tilde \zeta_{max} }(x)$ is convex, it attains its minimum in $x_{min}$, hence 
\[
H_{\tilde \zeta_{max} }(x_{min}) = \gamma - \frac{\beta^2(1-\tilde \zeta_{max})^2}{2\sigma^2}.
\]
Now if $x_{min}> 1$, then since $H_{\tilde \zeta_{max} }$ is increasing on $[0,1]$ we have 
\[
\min_{x\in [0,1]} H_{\tilde \zeta_{max} }(x) = H_{\tilde \zeta_{max} }(0) = \frac{1}{2}\sigma^2 + \gamma - \beta(1 - \tilde \zeta_{max}).
\]
In both cases, the first assumption 
\[
\tilde \zeta_{max}> 1 - \frac{B^*}{\beta}
\]
 implies that $H_{\tilde \zeta_{max} }(x) > 0$ for all $x\in (0,1)$.

By equation \eqref{eq: aux ODE} we then have 
\[
 \psi'(x) = \frac{f_1(x)\psi(x) + f_0(x)}{g(x)}.
\]
Then on the set of maximal intervention, $\{x \in [0,1]: \zeta^*(x) = \tilde \zeta_{max}\}$, we have 
\[ 
\psi'(x) = \frac{H_{\tilde \zeta_{max} }(x)}{g(x)} \left(\psi(x) - \frac{c_1(x) + (\NN-x)c_2(\tilde \zeta_{max}) - \eta}{H_{\tilde \zeta_{max} }(x)}\right).
\]
Now set 
\begin{align*}
    \psi_{eq}(x) &:=   \frac{c_1(x) + (\NN-x)c_2(\tilde \zeta_{max}) - \eta}{H_{\tilde \zeta_{max} }(x)}, \\
    \delta(x) &:= \psi(x) - \psi_{eq}(x). 
\end{align*}
Note that $\delta$ is continuous on $(0,1)$, since $\psi$ and $\psi_{eq}$ both are continuous. 
Then we have 
\begin{align*}
    \delta'(x) &= \psi'(x) -\psi_{eq}'(x) \\
    &=  \frac{H_{\tilde \zeta_{max} }(x)}{g(x)} \delta(x) - \psi_{eq}'(x)
\end{align*}
and  
\begin{align}\label{psi_eq-modified}
    \psi_{eq}'(x) = \frac{\big(c_1'(x) - c_2(\tilde \zeta_{max})\big)H_{\tilde \zeta_{max}}(x) + \big(\sigma^2(\NN-x) - \beta(1-\tilde \zeta_{max})\big)\big(c_1(x) + (\NN-x)c_2(\tilde \zeta_{max}) - \eta \big)}{H_{\tilde \zeta_{max}}(x)^2}.
\end{align}
We now show that $\delta$ can at most have one change of sign. 

Note that it holds,
\[
H'_{\tilde \zeta_{\max}}(x)
=
\beta(1-\tilde \zeta_{\max})-\sigma^2(1-x)
\leq \beta(1-\tilde \zeta_{\max}).
\]
Furthermore, since $c_1$ is increasing and $\eta\geq0$,
\[
c_1(x)+(1-x)c_2(\tilde \zeta_{\max})-\eta
\leq c_1(1)+c_2(\tilde \zeta_{\max}).
\]
Therefore, using \eqref{psi_eq-modified}, we obtain
\[
\psi_{eq}'(x)
\geq
\frac{
\big(c_1'(x)-c_2(\tilde \zeta_{\max})\big)
m_{\tilde H}
-
\beta(1-\tilde \zeta_{\max})
\big(c_1(1)+c_2(\tilde \zeta_{\max})\big)
}
{H_{\tilde \zeta_{\max}}(x)^2}.
\]
Hence, under the additional assumption
\[
\inf_{x\in(0,1)}c_1'(x)
>
c_2(\tilde \zeta_{\max})
+
\frac{\beta(1-\tilde \zeta_{\max})}
{m_{\tilde H}}
\big(c_1(1)+c_2(\tilde \zeta_{\max})\big),
\]
we have
\[
\psi_{eq}'(x)>0
\]
for all \(x\in(0,1)\).

At any point where $\delta(x)=0$, we thus have $\delta'(x)=-\psi_{eq}'(x)<0$.
Consequently, since $\delta$ is continuous it can change sign at most once.
Since $\psi_0<\theta_{\tilde \zeta_{max}}$ and $\psi_1>\theta_{\tilde \zeta_{max}}$, it follows that $\psi$
crosses $\theta_{\tilde \zeta_{max}}$ exactly once.
\end{proof}

\subsection{Bang-Bang optimality}\label{subsect: Bang-Bang opt}

\textbf{(Persistence near 0)} Let $\mathcal{E}$ denote the class of admissible controls $\zeta$
such that there exist constants $\delta_0\in(0,\frac12)$ and $\underline\zeta<\zm$ fulfilling
\begin{equation}\label{eq:persistence-near-0}
\begin{aligned}
\zeta(x) &\le \underline\zeta \\
\beta(1-\underline\zeta) &> \gamma+\frac{\sigma^2}{2}.
\end{aligned}
\end{equation}
for all $x\in(0,\delta_0)$.

Under this class $\mathcal{E}$ of controls it is ensured that the effective transmission rate remains
strictly above the stochastic persistence threshold near the disease--free
boundary. Specifically \eqref{eq:persistence-near-0} ensures that the negative drift is not too large as discussed  in Section \ref{sbsec: extinction}.

We now state our main result. 
\begin{theorem}[Bang--bang optimality via matching]
\label{thm:main-bb}

Let us assume that Assumptions 
\eqref{A0}-  \eqref{ass:c2-concave} hold.
If there exists a solution $(\phi, \eta)$ to the HJB equation \eqref{eq:HJB} satisfying $\eta < c_2(\p)$, then there exists a $x_*\in (0,1)$ such that $A = [x_*,1]$
and the threshold control 
\[
\zeta^\star(x)=\zm\mathbf{1}_{\{x\ge x_*\}}
\]
is optimal for the ergodic control problem in the class $\mathcal{E}$ of admissible $\zeta$ satisfying the persistence condition \eqref{eq:persistence-near-0}, i.e.
\begin{align}\label{eq: cost}
    \eta
=\inf_{\zeta\in\mathcal{E}}\ \limsup_{T\to\infty}\frac1T\,\E\!\Big[\int_0^T C(I_t^\zeta,\zeta_t)\,dt\Big]
=\lim_{T\to\infty}\frac1T\,\E\!\Big[\int_0^T C\big(I_t^{\zeta^\star},\zeta^\star(I_t^{\zeta^\star})\big)\,dt\Big].
\end{align}
\end{theorem}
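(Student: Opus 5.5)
The proof has two parts: the shape of $A$, and a verification argument for optimality.

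\textbf{Shape of $A$.} By Lemma~\ref{lm: single-crossing}, $\psi=x\phi'$ crosses the level $\theta$ exactly once, at some point $x_*\in(0,1)$. The limits $\lim_{x\downarrow0}\psi<0<\theta$ and $\lim_{x\uparrow1}\psi=\psi_1>\theta$ from Lemma~\ref{limit_ineq} fix the direction of the crossing. Since $\delta'<0$ at zeros of $\delta$ together with the boundary behaviour, $\psi<\theta$ on $(0,x_*)$ and $\psi\ge\theta$ on $[x_*,1)$. By \eqref{eq: bang bang set} this gives $A=[x_*,1)$, which the statement writes as $[x_*,1]$. The concavity argument in Section~\ref{subsect: HJB approach} shows the pointwise minimiser in \eqref{eq:HJB} is $\zeta^\star(x)=\zm\mathbf 1_{\{x\ge x_*\}}$. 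This feedback lies in $\mathcal E$: take $\delta_0<x_*$ and $\underline\zeta=0$, and use \eqref{A2}. The SDE with this discontinuous drift still has a unique strong solution in $(0,1)$, by Zvonkin/Veretennikov-type results for bounded measurable drift with nondegenerate noise locally in $(0,1)$, combined with Feller boundary classification. This shows $\zeta^\star$ is admissible.

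\textbf{Lower bound.} Fix $\zeta\in\mathcal E$ and apply Itô's formula to $\phi(I_t)$. Here $\phi$ is $C^1$ and piecewise $C^2$ with a single kink of $\phi''$ at $x_*$, so I would use the Itô–Tanaka / generalized Itô formula (the local time term vanishes since $\phi'$ is continuous). By \eqref{eq:HJB}, $\A^{\zeta_t}\phi(I_t)+C(I_t,\zeta_t)\ge\eta$, and therefore
\[
\phi(I_{T\wedge\tau_n})-\phi(I_0)+\int_0^{T\wedge\tau_n}C(I_t,\zeta_t)\,dt\ \ge\ \eta\,(T\wedge\tau_n)+M_{T\wedge\tau_n},
\]
with $\tau_n$ the exit times from $[1/n,1-1/n]$ and $M$ a localized martingale. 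To pass to the limit and divide by $T$, I need $\limsup_T \frac1T\E[\phi(I_T)]\le 0$ (more precisely, $\liminf_T$ of $\frac1T \E[-\phi(I_T)]$ must not spoil the inequality).

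Near $x=1$: $\psi$ has a finite limit, so $\phi$ is bounded near $1$.

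Near $x=0$: $\psi\to\psi_0\in(-\infty,0)$, so $\phi(x)\sim-\psi_0\log(1/x)\cdot(-1)$, i.e. $\phi(x)\approx |\psi_0|\log(1/x)\to+\infty$. Hence $-\phi(I_T)$ is bounded above and the problematic term is harmless for the lower bound, provided $\E[\phi(I_T)]$ stays finite; the sign actually helps here. The real need is the uniform integrability of $\phi(I_{T\wedge\tau_n})$ as $n\to\infty$, i.e. control of $\E[\log(1/I_T)]$.

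\textbf{Main obstacle: control of $\E\log(1/I_T)$.} This is exactly where the persistence condition \eqref{eq:persistence-near-0} enters. Itô applied to $\log I_t$ gives drift
\[
\beta(1-\zeta_t)(1-I_t)-\gamma-\tfrac12\sigma^2(1-I_t)^2,
\]
which for $I_t<\delta_0$ (with $\delta_0$ small) is bounded below by a positive constant. Away from $0$ the drift is bounded. From this I would derive $\sup_T\E[\log(1/I_T)]<\infty$, or at least $o(T)$, by a Lyapunov/comparison argument. The same bound yields $\E[\phi(I_{T\wedge\tau_n})]\to\E[\phi(I_T)]$ via Fatou or monotone convergence, and the martingale $M$ is a true martingale since $\phi'(x)\sigma x(1-x)=\sigma\psi(x)(1-x)$ is bounded. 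Dividing by $T$ then gives $\limsup_T\frac1T\E\int_0^TC\ge\eta$ for every $\zeta\in\mathcal E$.

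\textbf{Attainment.} For $\zeta^\star$ the HJB inequality is an equality, so the same computation yields the equality
\[
\frac1T\E\int_0^TC\,dt=\eta+\frac1T\bigl(\E\phi(I_0)-\E\phi(I_T)\bigr).
\]
The bound $\sup_T\E|\phi(I_T)|<\infty$ from the previous step then gives the limit $\eta$. Alternatively, positive recurrence of the one-dimensional diffusion and the ergodic theorem against its explicit speed-measure stationary density also give the limit.
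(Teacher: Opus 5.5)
Your proposal is correct. It has the same skeleton as the paper's proof: Lemma~\ref{lm: single-crossing} for the shape of $A$, Itô's formula with the HJB inequality, a negligible boundary term, and equality along $\zeta^\star$. The two supporting estimates, however, are obtained differently, and in one place your version is the sounder one.

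\textbf{Martingale term.} You observe that the Itô integrand is $\phi'(x)\sigma x(1-x)=\sigma\psi(x)(1-x)$, which is bounded. The paper instead proves $\E\int_0^t I_s^{-2}\,ds<\infty$ via Grönwall (Lemma~\ref{lm: sq int Mart}); yours is shorter.

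\textbf{Boundary term.} You use the drift of $\log I$ under the persistence condition \eqref{eq:persistence-near-0}. The paper's Lemma~\ref{lm:lta} is stated for every admissible control and relies on a bound $\A^\zeta V\le d$, uniform in $\zeta$, for $V=\frac12\log^2x+\frac12\log^2(1-x)$. That uniform bound does not hold. Near $0$ one has $\A^\zeta V(x)=\bigl(\gamma+\frac12\sigma^2-\beta(1-\zeta)\bigr)\abs{\log x}+O(1)$, which is unbounded above when $\zeta>\p$. For example, take $\zeta\equiv\zm$, which is admissible but not in $\mathcal E$. Then $d\log I_t=-H_{\zm}(I_t)\,dt+\sigma(1-I_t)\,dW_t$ with $H_{\zm}\ge m_{\tilde H}>0$ by \eqref{A0}, so $\E\log(1/I_T)$ grows linearly and $\frac1T\E\phi(I_T)$ does not vanish. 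Your route is therefore where the restriction of the infimum to $\mathcal E$ actually does its work. Under \eqref{eq:persistence-near-0} the paper's $\sqrt{V(x)+dT}$ bound is recovered, and the two arguments coincide in substance.

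\textbf{Attainment.} Along $\zeta^\star$ you read off the limit directly from the equality. The paper invokes positive Harris recurrence and the ergodic theorem, which yields more (pathwise averages) but needs the Foster--Lyapunov machinery. You also address admissibility of the discontinuous feedback $\zeta^\star$, which the paper leaves implicit.

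Two slips need repair.

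First, your remark that the sign of $\phi$ near $0$ ``actually helps'' is backwards. We have $\phi\to+\infty$ as $x\downarrow0$, and
\[
\frac1T\E\int_0^T C(I_t,\zeta_t)\,dt\;\ge\;\eta+\frac{\phi(x)}{T}-\frac1T\E\bigl[\phi(I_T)\bigr].
\]
So you need the \emph{upper} bound $\limsup_{T}\frac1T\E[\phi(I_T)]\le0$, i.e.\ precisely $\E\log(1/I_T)=o(T)$. Boundedness of $\phi$ from below only serves the reverse inequality along $\zeta^\star$. Your later plan does aim at this estimate, so the argument survives.

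Second, Fatou gives the wrong direction for letting $n\to\infty$ in $\E\phi(I_{T\wedge\tau_n})$. There are two ways to fix this.
\begin{itemize}
\item Prove the estimate for the stopped process uniformly in $n$. Set $Y=\log(1/I)$ and $H^\zeta(x):=\frac12\sigma^2(1-x)^2+\gamma-\beta(1-\zeta)(1-x)$. The drift of $Y^2$ is $2YH^{\zeta_t}(I_t)+\sigma^2(1-I_t)^2$. This is bounded above under \eqref{eq:persistence-near-0}, because $Y\ge0$ and $H^{\zeta_t}(I_t)<0$ when $I_t$ is small, while away from $0$ both $Y$ and $H^{\zeta_t}$ are bounded. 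This gives $\E Y_{T\wedge\tau_n}^2\le Y_0^2+d'T$.
\item Drop the localization altogether. The generator satisfies $\A^{\zeta}\phi(x)=\eta-C(x,\zeta^\star(x))+\beta\bigl(\zeta^\star(x)-\zeta\bigr)(1-x)\psi(x)$, which is bounded, and the stochastic integral is a true martingale. So Itô's formula can be taken in expectation directly.
\end{itemize}
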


The threshold $x_*$ has a natural epidemiological interpretation.
When infection prevalence is low ($x<x_*$), the marginal benefit 
of intervention is outweighed by its cost, 
so letting the infection circulate is optimal. 
Once $x$ exceeds $x_*$, the social cost of infection dominates, 
and maximal intervention becomes optimal.

\begin{remark}
In view of \eqref{ass:c2-concave} and Assumption \eqref{A1}, the total cost $C(x,\zeta)$ is increasing both in the infection level $x$ (for each fixed $\zeta$) and in the intervention level $\zeta$ (for each fixed $x$).
\end{remark}

We now present the proof of our main result Theorem \ref{thm:main-bb}. The proof relies on two lemmas: Lemma \ref{lm: sq int Mart} and Lemma \ref{lm:lta}, stating that the stochastic integral $\big(\int_0^t \phi'(I_s)\,dW_s\big)_{t\ge0}$ is a square–integrable martingale and for admissible control $\zeta$ it holds for each initial value $x \in(0,1)$ that $\limsup_{T\to\infty}\frac1T \E_x[\phi(I_T^\zeta)] \leq 0$. We deferrer the proofs to Section \ref{sec: St-contr-eq}.

\begin{proof}[Proof of Theorem~\ref{thm:main-bb}]
    In Section \ref{sbsec: extinction} we considered the case when the infection goes extinct. Specifically it follows from \eqref{eq: long-term-cost}  that the expected long-term average cost is bounded by $c_2(\p)$.

    We now consider the solution $(\phi, \eta)$ of the HJB equation \eqref{eq:HJB} for which $\eta < c_2(\p)$ holds. Fix an arbitrary admissible control $\zeta$ and let $I^\zeta$ denote the corresponding solution of \eqref{eq:SDE} with $I_0=x\in(0,\NN)$. By Lemma~\ref{lm: sq int Mart} the stochastic integral $\big(\int_0^t \phi'(I_s)\,dW_s\big)_{t\ge0}$ is a square–integrable martingale, hence applying Itô's formula to $\phi(I_t^\zeta)$ yields
    \begin{equation}\label{eq: main th pr1}
        \E\bigl[\phi(I_T^\zeta)\bigr]
        = \phi(x)
        + \E\!\Big[\int_0^T \bigl(\A^{\zeta_t}\phi(I_t^\zeta)\bigr)\,dt\Big].
    \end{equation}
    By the HJB equation \eqref{eq:HJB},
    \[
        \A^{\zeta_t}\phi(I_t^\zeta) + C(I_t^\zeta,\zeta_t) - \eta \;\ge\; 0,
    \]
    so that
    \[
        \frac{1}{T}\E\bigl[\phi(I_T^\zeta)\bigr]
        - \frac{\phi(x)}{T}
        \;\ge\; 
        \,\eta - \frac{1}{T}\E\!\Big[\int_0^T C(I_t^\zeta,\zeta_t)\,dt\Big].
    \]
    Lemma~\ref{lm:lta} implies that
\[
\limsup_{T\to\infty}\left(\frac{1}{T}\E[\phi(I_T^\zeta)]-\frac{\phi(x)}{T}\right)\le 0,
\]
and therefore taking $\limsup_{T\to\infty}$ yields
    \begin{equation}\label{eq: main th pr2}
        \eta \;\le\; \limsup_{T\to\infty}
        \frac{1}{T}\E\!\Big[\int_0^T C(I_t^\zeta,\zeta_t)\,dt\Big]
    \end{equation}
    for every admissible $\zeta$.

    For the threshold control $\zeta^\star(x)=\zm\mathbf{1}_{\{x\ge x_\star\}}$, the HJB minimum is attained pointwise, so
    \[
        \A^{\zeta^\star}\phi(x) + C(x,\zeta^\star(x)) - \eta = 0
        \qquad\text{for all }x\in(0,\NN),
    \]
    and the above inequality becomes an equality. This yields
    \begin{align}\label{eq: limsup}
        \eta
        =\inf_{\textnormal{admissible }\zeta}\ \limsup_{T\to\infty}\frac1T\,\E\!\Big[\int_0^T C(I_t^\zeta,\zeta_t)\,dt\Big].
    \end{align}
    In Section \ref{subsec: erg-res} we show that the controlled diffusion under $\zeta^*$ is positive Harris recurrent. By the ergodic theorem this then implies that the long-run average cost converges, hence the $\limsup$ in \eqref{eq: limsup} is indeed the limit, i.e. 
    \[
    \eta
        =\lim_{T\to\infty}\frac1T\,\E\!\Big[\int_0^T C\big(I_t^{\zeta^\star},\zeta^\star(I_t^{\zeta^\star})\big)\,dt\Big],
    \]
    which shows both optimality of $\zeta^\star$ and the claimed value of the ergodic cost.
\end{proof}

\section{Qualitative implications and limiting regimes}\label{sec:qualitative}

The explicit characterization of optimal threshold policies obtained in the previous sections
allows us to derive a number of qualitative implications for endemic disease control which we collect in this section.

For simplicity, we assume linear infection costs $c_1(x)=\alpha x$ with $\alpha>0$.

\subsection{Limiting regimes for the transmission rate}

We first investigate how optimal long--run intervention behaves
when the baseline transmission rate approaches the persistence threshold from above.
In this regime, the invariant distribution of the uncontrolled SIS diffusion
concentrates near the disease--free boundary.
As a consequence, intervention becomes asymptotically redundant.

Recall that the stochastic reproduction number of the uncontrolled SIS diffusion is given by
\[
R_0^{S}(\beta)\;:=\;\frac{\beta}{\gamma}\;-\;\frac{\sigma^2}{2\gamma},
\]
and define the critical transmission rate
\[
\beta_{\mathrm{crit}}
\;:=\;
\inf\{\beta>0:\ R_0^{S}(\beta)>1\}
= \gamma+\frac12\sigma^2,
\]
so that $R_0^{S}(\beta_{\mathrm{crit}})=1$.

\begin{proposition}[Asymptotic optimality of no intervention as $R_0^S\downarrow 1$]
\label{prop:beta_crit_gray}
Let $\eta^\star(\beta)$ denote the optimal ergodic cost
and $\eta^0(\beta)$ the ergodic cost under the no--intervention policy.
Then
\[
\lim_{\beta\downarrow \beta_{\mathrm{crit}}}
\bigl|\eta^\star(\beta)-\eta^0(\beta)\bigr|=0.
\]
\end{proposition}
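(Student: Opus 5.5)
Since no intervention is always admissible, $\eta^\star(\beta)\le\eta^0(\beta)$. It therefore suffices to show that $\eta^0(\beta)\to0$ as $\beta\downarrow\beta_{\mathrm{crit}}$; the non-negativity of the running cost then forces $0\le\eta^0-\eta^\star\le\eta^0\to0$. The whole proof thus reduces to a statement about the uncontrolled SIS diffusion, and we never need to analyse the optimal policy itself. With $c_1(x)=\alpha x$ and $\zeta\equiv0$, the running cost is $C(x,0)=\alpha x$, since $c_2(0)=0$ by \eqref{ass:c2-concave}. For $\beta>\beta_{\mathrm{crit}}$ we have $R_0^S>1$, so by \cite[Thm.~5.1 \& Thm.~6.2]{gray2011stochastic} the uncontrolled process is positive recurrent with a unique stationary law $\pi_\beta$. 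The ergodic theorem, combined with boundedness of the cost on $(0,1)$ and dominated convergence, gives
\[
\eta^0(\beta)=\alpha\int_0^1 x\,\pi_\beta(dx)=\alpha\,\E_{\pi_\beta}[I].
\]

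The key step is to show that the stationary mean tends to $0$. Two routes are available. The first is to quote the explicit mean from \cite[Thm.~6.3]{gray2011stochastic}. The second, which is self-contained, uses the speed density of the one-dimensional diffusion. With drift $b(x)=(\beta(1-x)-\gamma)x$ and diffusion coefficient $a(x)=\sigma^2(1-x)^2x^2$, the invariant density is proportional to
\[
m(x)=\frac{1}{a(x)}\exp\Big(\int^x\frac{2b(y)}{a(y)}dy\Big).
\]
Near $0$ the ratio $2b/a$ behaves like $2(\beta-\gamma)/(\sigma^2x)$, so $m(x)\sim x^{2(\beta-\gamma)/\sigma^2-2}$. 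Setting $\kappa:=2(\beta-\gamma)/\sigma^2-1>0$, which holds exactly when $\beta>\beta_{\mathrm{crit}}$, we get $m(x)\approx x^{\kappa-1}$ near $0$, and $\kappa\downarrow0$ as $\beta\downarrow\beta_{\mathrm{crit}}$. Near $1$, the factor $\exp(-2\gamma/(\sigma^2(1-x)))$ makes $m$ integrable uniformly in $\beta$ on a compact range.

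With these asymptotics, the total mass $\int_0^1 m$ blows up like $1/\kappa$, while $\int_0^1 x\,m(x)\,dx$ stays bounded uniformly as $\kappa\downarrow0$. Hence
\[
\E_{\pi_\beta}[I]=\frac{\int_0^1 x\,m(x)\,dx}{\int_0^1 m(x)\,dx}=O(\kappa)\to0 .
\]
This is exactly the concentration of the invariant law at the disease-free boundary mentioned before the proposition. (Cross-check: Gray et al.'s explicit mean should also vanish at $R_0^S=1$.)

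The main obstacle is making these two estimates rigorous with constants uniform in $\beta$ in a right neighbourhood of $\beta_{\mathrm{crit}}$. For this I would write
\[
\exp\Big(\int_{1/2}^x\frac{2b}{a}\Big)=x^{\kappa+1}\,h_\beta(x),
\]
where $h_\beta$ is continuous and positive on $(0,1)$ and is bounded above and below on $(0,1/2]$ uniformly in $\beta\in(\beta_{\mathrm{crit}},\beta_{\mathrm{crit}}+1]$. This follows from an explicit partial-fraction integration of $2b/a$, which is rational in $x$. The bounds then follow from $\int_0^{1/2}x^{\kappa-1}\,dx=2^{-\kappa}/\kappa$ and $\int_0^{1/2}x^{\kappa}\,dx\le1$. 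A secondary point is the passage from the ergodic theorem to convergence of the expected averages for every initial value $x\in(0,1)$; this holds by bounded convergence because the cost is bounded. Finally, one should note that $\eta^\star$ here is the infimum over admissible controls. Any restriction to the class $\mathcal E$ does not affect the argument, since $\zeta\equiv0\in\mathcal E$ whenever $\beta>\beta_{\mathrm{crit}}$.
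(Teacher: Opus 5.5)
Your proof is correct and follows the same route as the paper. It uses the sandwich $0\le\eta^\star(\beta)\le\eta^0(\beta)=\alpha\,\E_{\pi_\beta}[I]$, which comes from admissibility of $\zeta\equiv0$ and nonnegativity of $C$, and then shows that the stationary mean vanishes as $\beta\downarrow\beta_{\mathrm{crit}}$.

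The only difference is in that last step. The paper cites Gray et al.'s explicit stationary mean and appeals to continuity in $R_0^S$; this implicitly uses that the mean, $\frac{2\beta(\beta-\gamma-\sigma^2/2)}{2\beta^2-\sigma^2(\beta+\gamma)}$, vanishes at $\beta_{\mathrm{crit}}$. Your estimate from the speed density $m(x)\propto x^{\kappa-1}(1-x)^{-\kappa-3}e^{-2\gamma/(\sigma^2(1-x))}$ proves this directly, without the citation, and gives the rate $O(\kappa)$.
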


The proof relies on the characterization of the stationary distribution
of the uncontrolled SIS diffusion near the persistence boundary
and is given in Appendix~\ref{app:qualitative}.

\subsection{Limiting regimes for noise intensity}

We next consider how optimal long--run intervention behaves
as the intensity of stochastic fluctuations varies.
In contrast to the deterministic setting, noise can fundamentally alter
the persistence properties of the SIS dynamics and thereby the value of control.

Similarly as above, we define the critical noise intensity
\[
\sigma_{\mathrm{crit}}
:=\sup\{\sigma>0:\ R_0^{S}(\sigma)>1\},
\]
so that $R_0^{S}(\sigma_{\mathrm{crit}})=1$.

As the noise intensity approaches $\sigma_{\mathrm{crit}}$ from below,
the invariant distribution of the uncontrolled SIS diffusion concentrates
near the disease--free boundary.
Consequently, the long--run infection burden vanishes, and intervention
becomes asymptotically redundant.

\begin{proposition}[Asymptotic optimality of no intervention as noise approaches criticality]
\label{prop:sigma_crit_gray}
Let $\eta^\star(\sigma)$ denote the optimal ergodic cost
and $\eta^0(\sigma)$ the ergodic cost under the no--intervention policy.
Then
\[
\lim_{\sigma\uparrow \sigma_{\mathrm{crit}}}
\bigl|\eta^\star(\sigma)-\eta^0(\sigma)\bigr|=0.
\]
\end{proposition}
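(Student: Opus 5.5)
The approach mirrors Proposition~\ref{prop:beta_crit_gray}: squeeze the optimal cost between a lower bound of zero and the no-intervention cost, then show the no-intervention cost tends to zero.

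\textbf{Sandwich.} Since $\zeta\equiv 0$ is admissible and lies in $\mathcal E$, we have $\eta^\star(\sigma)\le \eta^0(\sigma)$. Since $C\ge 0$ (because $c_1,c_2\ge0$), we also have $\eta^\star(\sigma)\ge 0$. Hence
\[
0\le \eta^0(\sigma)-\eta^\star(\sigma)\le \eta^0(\sigma),
\]
and it suffices to prove $\eta^0(\sigma)\to0$ as $\sigma\uparrow\sigma_{\mathrm{crit}}$.

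\textbf{Identifying the critical noise.} Here $\sigma_{\mathrm{crit}}^2=2(\beta-\gamma)$, and we need $\beta>\gamma$ so that this set is nonempty. For $\sigma<\sigma_{\mathrm{crit}}$ we are in the regime $R_0^S>1$. By \cite[Thm.~5.1, 6.2]{gray2011stochastic}, the uncontrolled diffusion is positive recurrent with a unique stationary law $\pi_\sigma$ on $(0,1)$. By the ergodic theorem (as used at the end of the proof of Theorem~\ref{thm:main-bb}) and boundedness of the cost,
\[
\eta^0(\sigma)=\int c_1\,d\pi_\sigma=\alpha\int x\,\pi_\sigma(dx)
\]
with $c_1(x)=\alpha x$ and no intervention cost.

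\textbf{Mean of the stationary law.} I use the explicit mean from \cite[Thm.~6.3]{gray2011stochastic}. Alternatively it can be derived directly by applying Itô's formula to $\log I_t$ and averaging under $\pi_\sigma$. The drift of $\log I$ is $\beta(1-x)-\gamma-\tfrac12\sigma^2(1-x)^2$, so stationarity gives
\[
\int\Big(\beta(1-x)-\gamma-\tfrac12\sigma^2(1-x)^2\Big)\pi_\sigma(dx)=0,
\]
which I rewrite as
\[
\beta-\gamma-\tfrac12\sigma^2=\int\Big(\beta x-\sigma^2 x+\tfrac12\sigma^2x^2\Big)\,d\pi_\sigma .
\]
If $\beta\ge\sigma^2$, the integrand is bounded below by $(\beta-\sigma^2)x+\tfrac12\sigma^2x^2\ge \tfrac12\sigma^2 x^2$. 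Jensen then gives
\[
\Big(\int x\,d\pi_\sigma\Big)^2\le \frac{2}{\sigma^2}\Big(\beta-\gamma-\tfrac12\sigma^2\Big),
\]
and the right-hand side tends to $0$ as $\sigma\uparrow\sigma_{\mathrm{crit}}$. In the other case I would instead rely on Gray et al.'s closed-form mean. Their mean formula involves $\sqrt{\beta^2-2\sigma^2(\beta-\gamma-\sigma^2/2)}$ and vanishes continuously at $R_0^S=1$, which covers all cases. Justifying the $\log$ averaging requires integrability of $\log I$ under $\pi_\sigma$; this holds because the density behaves like $x^{2(\beta-\gamma)/\sigma^2-2}$ near $0$, which is integrable.

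\textbf{Main obstacle.} The main obstacle is a rigorous version of this concentration step uniformly near criticality, i.e.\ making sure that $\int x\,d\pi_\sigma\to0$, rather than just having some mass escape. I will handle it through the explicit Gray et al.\ mean formula, checking that it is continuous in $\sigma$ and equals $0$ at $\sigma_{\mathrm{crit}}$. The remaining steps are routine.
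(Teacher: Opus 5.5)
Your proposal is correct and follows the paper's route: the sandwich $0\le \eta^\star(\sigma)\le \eta^0(\sigma)$, the identification $\eta^0(\sigma)=\alpha\int x\,\pi_\sigma(dx)$ via ergodicity, and the vanishing of the stationary mean taken from \cite[Thm.~6.3]{gray2011stochastic}. One correction: that stationary mean is not a square-root expression, and the case split on $\beta\ge\sigma^2$ and the appeal to the closed form are unnecessary. Combining your $\log$-identity with the stationarity identity $\int x(\beta(1-x)-\gamma)\,\pi_\sigma(dx)=0$, i.e.\ $\beta\int x^2\,d\pi_\sigma=(\beta-\gamma)\int x\,d\pi_\sigma$, gives $\int x\,d\pi_\sigma=\frac{2\beta(\beta-\gamma-\sigma^2/2)}{2\beta^2-\sigma^2(\beta+\gamma)}$ for all $\sigma<\sigma_{\mathrm{crit}}$. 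Its denominator tends to $2\gamma^2>0$ and its numerator to $0$, so the mean vanishes at criticality directly.
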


The proof follows a similar line of argument as the previous one and is given in Appendix~\ref{app:qualitative}.

\subsection{Cost limits and robustness of threshold behavior}

We briefly comment on extreme regimes of the intervention cost.
The following observations are based on admissibility and value comparison
arguments and do not rely on explicit solutions of the HJB equation.

\paragraph{Prohibitively expensive intervention.}
Consider a sequence of cost functions with $c_2(1)\to\infty$.
Any admissible policy that applies intervention on a set of positive time measure
incurs diverging ergodic costs.
In contrast, the no--intervention policy $\zeta\equiv0$ remains admissible
and yields finite long--run average cost.
Hence, no intervention is asymptotically optimal in this regime.

\paragraph{Vanishing intervention costs.}
Conversely, if $c_2(\zm)\to0$, intervention becomes essentially costless.
In this case, the ergodic cost achieved under full intervention $\zeta\equiv\zm$
provides an asymptotic lower bound for the optimal value.
As a consequence, the optimal ergodic cost converges to that of maximal intervention,
and optimal policies asymptotically coincide with $\zeta\equiv\zm$.

\section{Stochastic control equations}\label{sec: St-contr-eq}

This section is organized into three subsections. In Subsection \ref{sbsec: prp ODE} we analyze the limiting behavior of the auxiliary solution $\psi$. As a consequence, the value function $\phi$ grows logarithmically as  $x\downarrow0$. i.e. 
\[
\phi(x) \in \Theta(\abs{\log(x)}).
\]

The remaining two Subsections \ref{subsec: mom-est} and \ref{subsec: erg-res} address parts \eqref{eq: main th pr1} and \eqref{eq: main th pr2} respectively of the proof of Theorem \ref{thm:main-bb}.  The proof of Lemma \ref{lm: sq int Mart} relies on a coupling argument and noise comparison of two diffusions. To establish   Lemma \ref{lm:lta}, we provide a brief introduction  of the Foster-Lyapunov criteria for continuous-time processes and ergodic results published by Meyn and Tweedie's \cite{meyn1993stability}.

\subsection{Properties of the ODE - Solution}\label{sbsec: prp ODE}

\begin{lemma}[Limit inequality] \label{limit_ineq}
    Let $\eta$ and $x_*$ be chosen as in Section~\ref{sec:hjb} so that the matching conditions for $\psi$ are satisfied. Then $\psi$ is bounded on $(0,\NN)$ and
    \begin{equation*}
        \lim_{x \downarrow 0} \psi(x) = \frac{\eta}{\frac{1}{2}\sigma^2 + \gamma - \beta} =:\psi_0; \hspace{0.5cm} \lim_{x \uparrow \NN} \psi(x) = \frac{c_1(\NN) - \eta}{\gamma} =\psi_1
    \end{equation*}
    \begin{proof}
        We use L'Hopital's rule for the limits $\lim_{x \downarrow 0} \psi(x)$ and $\lim_{x \uparrow \NN} \psi(x)$. For the first limit we note for $x<x_*$ that
        \begin{equation*}
            j(x) = \frac{c_2(1)}{\beta} - \int_{y=x}^{x_*} e^{-F(y)}\frac{f_0(y)}{g(y)} dy =  \int_{y=0}^{x} e^{-F(y)}\frac{f_0(y)}{g(y)} dy; \hspace{0.25cm} k(x) = e^{-F(x)}; \hspace{0.25cm} \psi(x) = \frac{j(x)}{k(x)},
        \end{equation*}
        with $\lim_{x \downarrow 0} j(x) = \lim_{x \downarrow 0} k(x) = 0$. This ultimately follows as 
        $\beta > \frac{1}{2}\sigma^2 + \gamma$ and so $\lim_{x \downarrow 0} f_1(x) < 0$. Therefore $\lim_{x \downarrow 0} \psi(x) = \lim_{x \downarrow 0} \frac{j'(x)}{k'(x)}$. We have for $x<x_*$
        \begin{equation*}
            \frac{j'(x)}{k'(x)} = \frac{e^{-F(x)}\frac{f_0(x)}{g(x)}}{e^{-F(x)}\frac{-f_1(x)}{g(x)}} = \frac{-f_0(x)}{f_1(x)} = \frac{\eta - c_1(x)}{\frac{1}{2}(\NN-x)^2\sigma^2 + \gamma - \beta(\NN-x)}
        \end{equation*}
        and taking this limit as $x$ decreases to $0$ gives the result. The argument is similar for $\lim_{x \uparrow \NN} \psi(x)$. 
    \end{proof}
\end{lemma}

\begin{corollary}\label{corollary phi prime}
    Under the same conditions as the previous lemma, there exists $a \in (0,x_*)$ such that $3\psi_0/2x \leq \phi'(x) \leq \psi_0/2x$ on $(0,a)$. Furthermore there exists $k > 0$ such that $|\phi'(x)| \leq k$ for $x \in [a,\NN)$. 
    \begin{proof}
        By the previous lemma, there exists $a>0$ such that for $x<a$ we have $3\psi_0/2 \leq \psi(x) \leq \psi_0/2$ and the first part of the result follows. For the second part we note that the existence of $\psi_\NN$ confirms that $\psi$ is bounded on $[a,\NN)$ and so the result follows.
    \end{proof}
\end{corollary}
We take $a<x_*$ in Corollary \ref{corollary phi prime} for technical convenience.
\begin{corollary}\label{corollary phi}
    There exist positive constants $A$, $A'$, $K$, $K'$ such that the function $\phi$ satisfies:
    \begin{equation*}
        - A - K \log x \leq \phi(x) \leq A' - K' \log x 
    \end{equation*}
    for $x \in (0,\NN)$. 
    \begin{proof}
        This follows from Corollary \ref{corollary phi prime}. Specifically we can take $K = -\psi_0/2$, $K' = -3\psi_0/2$, $A = -k + \frac{\psi_0}{2}|\log a|$, $A' = k - \frac{3\psi_0}{2}|\log a|$ where $k$ is the constant given in Corollary \ref{corollary phi prime}.
        \par
        To see this note that for $x>a$ we have $|\phi(x)| \leq k|x-x_*| \leq k$ and for $x<a$ we have
        \begin{equation*}
            \phi(x) = -\int_{a}^{x_*} \phi'(y) dy - \int_{x}^a \phi'(y) dy \leq k - \int_{x}^a \phi'(y) dy.
        \end{equation*}
        By the bounds for $\phi'$ in Corollary \ref{corollary phi prime} we have
        \begin{equation*}
            \frac{3 \psi_0}{2}( \log a - \log x) = \frac{3 \psi_0}{2} \int_{x}^a \frac{1}{y} dy \leq \int_{x}^a \phi'(y) dy \leq \frac{\psi_0}{2} \int_{y=x}^a \frac{1}{y} dy = \frac{\psi_0}{2}( \log a - \log x)
        \end{equation*}
        and so we have
        \begin{equation*}
            -k + \frac{\psi_0}{2}|\log a| + \frac{\psi_0}{2}\log x \leq \phi(x) \leq k - \frac{3\psi_0}{2}|\log a| + \frac{3\psi_0}{2} \log x.
        \end{equation*}
    \end{proof}
\end{corollary}

\subsection{Moment estimates}\label{subsec: mom-est}

We will now proceed by determining some expectation conditions for the process $(I_t^\zeta)_{t \geq 0}$ which is the infection process (started at some $x \in (0,\NN)$) under control function $\zeta$. This shortened notation has already been used before.

\begin{lemma}\label{lm: sq int Mart}
Let $\zeta$ be an admissible control and let $(I_t)_{t\ge0}$ be the corresponding solution of the controlled SIS SDE \eqref{eq:SDE} with $I_0\in(0,\NN)$. Then for every $t>0$,
\[
\E\Big[\int_0^t \phi'(I_s)^2\,ds\Big] < \infty.
\]
In particular, the stochastic integral $\big(\int_0^t \phi'(I_s)\,dW_s\big)_{t\ge0}$ is a square–integrable martingale.
\end{lemma}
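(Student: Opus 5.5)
The plan is to reduce the claim to a negative-moment estimate for $I_t$ and to prove that estimate by an Itô–Gronwall argument with localization. By Corollary~\ref{corollary phi prime} there are $a\in(0,x_*)$ and $k>0$ with $|\phi'(x)|\le \tfrac{3|\psi_0|}{2}\,x^{-1}$ on $(0,a)$ and $|\phi'(x)|\le k$ on $[a,\NN)$. Hence, for all $x\in(0,\NN)$,
\[
\phi'(x)^2\le k^2+\tfrac94\psi_0^2\,x^{-2}.
\]
It therefore suffices to show that $\sup_{s\le t}\E[I_s^{-2}]<\infty$ for every $t>0$. Fubini then gives $\E\int_0^t\phi'(I_s)^2\,ds<\infty$, and the martingale property is the standard $L^2$ Itô isometry consequence.

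For the negative moment, set $Y_t:=1/I_t$. This is well defined because admissibility guarantees $I_t\in(0,\NN)$ for all $t$ almost surely. Itô's formula applied to \eqref{eq:SDE} gives
\[
dY_t = Y_t\bigl(\gamma-\beta(1-\zeta_t)(\NN-I_t)+\sigma^2(\NN-I_t)^2\bigr)\,dt-\sigma(\NN-I_t)Y_t\,dW_t .
\]
A second application of Itô's formula yields
\[
d(Y_t^2)=Y_t^2\bigl(2\gamma-2\beta(1-\zeta_t)(\NN-I_t)+3\sigma^2(\NN-I_t)^2\bigr)\,dt-2\sigma(\NN-I_t)Y_t^2\,dW_t .
\]
Because $\zeta_t\in[0,\zm]$ and $I_t\in(0,1)$, the drift coefficient is bounded above by $K:=2\gamma+3\sigma^2$. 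This bound holds for every admissible control; no feedback structure is needed.

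Next I localize. Let $\tau_n:=\inf\{t:\ I_t\le 1/n\}$. On $[0,\tau_n]$ we have $Y\le n$, so the stochastic integral stopped at $\tau_n$ is a true martingale. Taking expectations gives
\[
\E[Y_{t\wedge\tau_n}^2]\le x^{-2}+K\int_0^t\E[Y_{s\wedge\tau_n}^2]\,ds .
\]
Gronwall's lemma then gives $\E[Y_{t\wedge\tau_n}^2]\le x^{-2}e^{Kt}$ uniformly in $n$. Since $I$ never reaches $0$ in finite time, $\tau_n\to\infty$ almost surely. Fatou's lemma then gives $\E[I_t^{-2}]\le x^{-2}e^{Kt}$, and integrating over $[0,t]$ finishes the proof.

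The only delicate point is this localization step: the local-martingale term must vanish in expectation before Gronwall is applied, and non-explosion of $Y$ relies on the admissibility requirement that $I$ stays in $(0,1)$. If one prefers to avoid explicit Itô calculus on $Y^2$, an alternative is a comparison or coupling argument in which $I^\zeta$ is bounded below pathwise by the fully controlled process $I^{\zm}$. Then $\E[(I^{\zm}_t)^{-2}]$ would be estimated by the same computation; this seems to be the coupling route the authors have in mind. The direct computation above already covers arbitrary admissible $\zeta$, though.
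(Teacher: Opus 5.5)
Your proposal is correct and follows essentially the same route as the paper's proof. Both bound $\phi'(x)^2$ by a constant plus a multiple of $x^{-2}$ via Corollary~\ref{corollary phi prime}, apply It\^o's formula to $I_t^{-2}$ with a drift coefficient bounded above by $2\gamma+3\sigma^2$ uniformly over admissible controls, localize at $\tau_n$, and close with Gronwall. Despite the section overview's mention of coupling, the written proof is this direct computation, and your use of Fatou's lemma for $n\to\infty$ is a cleaner justification than the paper's appeal to monotone convergence, since $I_{t\wedge\tau_n}^{-2}$ need not be monotone in $n$.
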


\begin{proof}
By Corollary~\ref{corollary phi prime} there exist constants $M,K_0>0$ such that
\begin{equation}\label{eq:phi-prime-bound}
\phi'(x)^2 \;\le\; M x^{-2} + K_0^2
\qquad\text{for all }x\in(0,\NN).
\end{equation}
Hence it suffices to show that, for each fixed $t>0$,
\[
\E\Big[\int_0^t I_s^{-2}\,ds\Big] < \infty.
\]

Applying Itô's formula to $f(x)=x^{-2}$ yields
\[
df(I_t)
= I_t^{-2} a(I_t,\zeta_t)\,dt
 - 2\sigma I_t^{-2}(\NN-I_t)\,dW_t,
\]
where
\[
a(x,\zeta)
:= -2\beta(1-\zeta)(\NN-x)
+2\gamma
+3\sigma^2(\NN-x)^2.
\]

Define the stopping times
\[
\tau_n := \inf\{t\ge0:\, I_t \le 1/n\}\wedge n.
\]
Taking expectations and using that the local martingale term has mean zero, we obtain
\[
\E[I_{t\wedge\tau_n}^{-2}]
\le I_0^{-2} + K \int_0^t \E[I_{s\wedge\tau_n}^{-2}]\,ds.
\]
By Grönwall's lemma,
\[
\E[I_{t\wedge\tau_n}^{-2}]
\le I_0^{-2} e^{Kt}
\qquad\text{for all }t\ge0,\ n\in\N.
\]

Since $I_{t\wedge\tau_n}^{-2}\uparrow I_t^{-2}$ almost surely as $n\to\infty$, monotone convergence yields
\[
\E[I_t^{-2}] \le I_0^{-2} e^{Kt}.
\]
Therefore,
\[
\E\Big[\int_0^t I_s^{-2}\,ds\Big]
\le \int_0^t \E[I_s^{-2}]\,ds
\le I_0^{-2}\int_0^t e^{Ks}\,ds
<\infty.
\]

Combining this with \eqref{eq:phi-prime-bound} and Fubini's theorem gives
\[
\E\Big[\int_0^t \phi'(I_s)^2\,ds\Big]
\le M\E\Big[\int_0^t I_s^{-2}\,ds\Big] + K_0^2 t
<\infty.
\]
Hence $\int_0^t \phi'(I_s)\,dW_s$ is square–integrable and therefore a martingale with mean zero.
\end{proof}

\subsection{Ergodicity results}\label{subsec: erg-res}
A key step in the verification argument is to control the long--run behaviour of the
controlled SIS diffusion. We first show boundary non-attainment and a moment estimate that suffices for the verification argument (valid for any admissible feedback control). We then discuss positive Harris recurrence under an additional persistence assumption on the feedback control.

Fix an admissible feedback control $\zeta:(0,\NN)\to[0,1]$ and let $I^\zeta$ solve
\begin{equation}\label{eq:SDE-erg}
    dI_t^\zeta
    = b^\zeta(I_t^\zeta)\,dt + a(I_t^\zeta)\,dW_t,
    \qquad I_0^\zeta\in(0,\NN),
\end{equation}
where
\[
    b^\zeta(x):=\beta(1-\zeta(x))(\NN-x)x - \gamma x,
    \qquad
    a(x):=\sigma(\NN-x)x .
\]
Note that $a(x)>0$ for all $x\in(0,\NN)$ and that both $b^\zeta$ and $a$ vanish at the
boundaries $0$ and $\NN$.

\paragraph{Non--explosion.}

We show that $I^\zeta$ cannot hit $0$ or $1$ in finite time. Let
\[
V_0(x):=-\log x-\log(1-x),\qquad x\in(0,1).
\]
Then
\[
V_0'(x)=-\frac1x+\frac1{1-x},\qquad
V_0''(x)=\frac1{x^2}+\frac1{(1-x)^2}.
\]
Denote by $\A^\zeta$ the generator of $I^\zeta$. A direct computation yields
\begin{align}\label{eq:AV0-checked}
\A^\zeta V_0(x)
&=b^\zeta(x)\Big(-\frac1x+\frac1{1-x}\Big)+\frac12 a(x)^2\Big(\frac1{x^2}+\frac1{(1-x)^2}\Big)\nonumber\\
&=\Big(\beta(1-\zeta(x))(1-x)-\gamma\Big)\Big(-1+\frac{x}{1-x}\Big)
+\frac12\sigma^2\big((1-x)^2+x^2\big).
\end{align}
A short calculation, see Appendix \ref{app:non-expl}, yields
\begin{equation}\label{eq:AV0-uniform}
\A^\zeta V_0(x)\le d_0:=\max\{\beta,\gamma\}+\frac12\sigma^2,
\qquad x\in(0,1),
\end{equation}
uniformly over all admissible feedback controls $\zeta$.

\medskip
\noindent
For $m\ge 2$ define the exit time
\[
\tau_m:=\inf\{t\ge 0:\ I_t^\zeta\notin[1/m,\,1-1/m]\}.
\]
Applying Dynkin's formula and using \eqref{eq:AV0-uniform} yields, for every $t\ge 0$,
\begin{equation}\label{eq:V0-stopped-bound}
\E_x\!\big[V_0(I_{t\wedge\tau_m}^\zeta)\big]
=V_0(x)
+\E_x\int_0^{t\wedge\tau_m}\A^\zeta V_0(I_s^\zeta)\,ds
\le
V_0(x)+d_0\,t
=:C_t
<\infty.
\end{equation}

\medskip
\noindent
On the event $\{\tau_m\le t\}$ we have $I_{t\wedge\tau_m}^\zeta=I_{\tau_m}^\zeta\notin(1/m,1-1/m)$, hence on this event
\[
V_0(I_{t\wedge\tau_m}^\zeta)\ge K_m
\;:=\;\inf_{y\notin(1/m,\,1-1/m)}V_0(y).
\]
Since $V_0(y)\to\infty$ as $y\downarrow0$ or $y\uparrow1$, we have $K_m\to\infty$ as $m\to\infty$.
Therefore, by Markov's inequality and \eqref{eq:V0-stopped-bound},
\[
\PR_x(\tau_m\le t)
\le\PR_x\big(V_0(I_{t\wedge\tau_m}^\zeta)\ge K_m\big)
\le \frac{\E_x[V_0(I_{t\wedge\tau_m}^\zeta)]}{K_m}
\le \frac{C_t}{K_m}\xrightarrow[m\to\infty]{}0.
\]
Since $\tau_m\uparrow\tau_\infty:=\lim_{m\to\infty}\tau_m$ and $\{\tau_\infty\le t\}\subseteq\{\tau_m\le t\}$ for every $m$,
we obtain
\[
\PR_x(\tau_\infty\le t)\le \PR_x(\tau_m\le t)\xrightarrow[m\to\infty]{}0,
\]
hence $\PR_x(\tau_\infty\le t)=0$ for all $t>0$. Consequently,
\begin{equation}\label{eq:non-explosion}    
\PR_x(\tau_\infty<\infty)=0,
\end{equation}
i.e.\ the diffusion cannot hit the boundary in finite time and is non--explosive on $(0,1)$.

\paragraph{Foster--Lyapunov drift estimate.}

We now study whether our controlled diffusions are ergodic. This needs the following assumption bounding the interventions around 0. 

We use the Lyapunov function
\[
    V(x):=\frac12\log^2 x+\frac12\log^2(\NN-x),
    \qquad x\in(0,\NN).
\]

In Appendix~\ref{app:driftV}, a direct computation shows that under
Assumption (A3) there exist constants
$c,d>0$ and $\delta\in(0,\frac12)$ such that
\[
    f(x):=|\log x|+|\log(\NN-x)|+1
\]
satisfies the Foster--Lyapunov drift inequality:
\begin{equation}\label{eq:FL}
    \A^\zeta V(x)
    \le
    -c f(x)
    + d\,\mathbf 1_{[\delta,\NN-\delta]}(x),
    \qquad x\in(0,\NN),
\end{equation}
for every admissible feedback control $\zeta\in \mathcal{E}$ satisfying \eqref{eq:persistence-near-0}.

\begin{proposition}[Ergodicity under persistent feedback controls]
\label{prop:ergodic}
For any admissible
feedback control $\zeta$ satisfying (A3),
the diffusion \eqref{eq:SDE-erg} is non--explosive and positive Harris
recurrent on $(0,\NN)$.

\end{proposition}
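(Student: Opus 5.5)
Non-explosion is already in hand: \eqref{eq:non-explosion} was derived for an arbitrary admissible feedback control via $V_0$. So the proof reduces to positive Harris recurrence. I will get it from the continuous-time Foster--Lyapunov criterion of Meyn and Tweedie \cite{meyn1993stability}, in the form: if the process is non-explosive, if some compact set $K$ is \emph{petite}, and if $\A^\zeta V\le -c f + d\mathbf 1_K$ with $f\ge 1$, then the process is positive Harris recurrent and has an invariant probability $\pi$ with $\pi(f)<\infty$. The drift inequality \eqref{eq:FL} supplies this with $K=[\delta,\NN-\delta]$ and $f\ge1$. That leaves two things to check: (i) \eqref{eq:FL} applies to the given control; (ii) compact subsets of $(0,\NN)$ are petite.

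For (i), I will rely on the computation in Appendix~\ref{app:driftV}. Its structure is as follows, and I would check it. We have
\[
\A^\zeta V(x)=b^\zeta(x)\Big(\frac{\log x}{x}-\frac{\log(1-x)}{1-x}\Big)+\tfrac12 a(x)^2\Big(\frac{1-\log x}{x^2}+\frac{1-\log(1-x)}{(1-x)^2}\Big).
\]
Near $0$, the dominant term is
\[
\log x\,\big(\beta(1-\zeta(x))(1-x)-\gamma-\tfrac12\sigma^2(1-x)^2\big).
\]
By \eqref{eq:persistence-near-0} the bracket is bounded below by a positive constant on $(0,\delta_0)$ after possibly shrinking $\delta_0$. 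Since $\log x\to-\infty$, this yields $\le -c|\log x|$.

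Near $1$, the dominant term is $-\log(1-x)\cdot(-\gamma x)/(1-x)\cdot(1-x)$. More precisely, $b^\zeta(x)/(1-x)\to-\gamma$, so $-b\log(1-x)/(1-x)\approx\gamma\log(1-x)\to-\infty$. The diffusion contribution is bounded near $1$, so again we get $\le -c|\log(1-x)|$, uniformly in $\zeta$. On the middle compact set all terms are bounded. This gives \eqref{eq:FL}.

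For (ii), the diffusion coefficient $a(x)=\sigma(1-x)x$ is bounded away from $0$ on every compact subset of $(0,\NN)$. The drift $b^\zeta$ is bounded and measurable, though possibly discontinuous, since $\zeta$ is only a feedback function. I would argue that the process is a regular one-dimensional diffusion on $(0,\NN)$. Its scale density is $s'(x)=\exp(-\int 2b^\zeta/a^2)$, which is locally bounded and positive, and its speed measure has a positive density. Hence every point is reached from every other point with positive probability. The transition kernels then have positive densities with respect to Lebesgue measure, for example through the Feller construction or through Engelbert--Schmidt weak existence. This gives $\psi$-irreducibility with Lebesgue measure, and it gives the T-chain property, so compact sets are petite. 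This is the step I expect to be the main obstacle, because the feedback control is discontinuous and the usual strong Feller arguments for smooth coefficients do not apply directly.

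The cleanest alternative is purely one-dimensional. A regular diffusion on $(0,\NN)$ whose boundaries are inaccessible (already shown) is recurrent if and only if $s(0+)=-\infty$ and $s(\NN-)=+\infty$. It is positive recurrent if and only if, in addition, the speed measure $m(dx)=2\,dx/(a^2 s')$ is finite. Near $0$, \eqref{eq:persistence-near-0} gives $2b/a^2\ge 2\kappa/(\sigma^2x)$ with $2\kappa/\sigma^2>1$, where $\kappa$ is the positive constant lower bound on $\beta(1-\underline\zeta)-\gamma$ supplied by \eqref{eq:persistence-near-0}. Therefore $s'(x)\gtrsim x^{-2\kappa/\sigma^2}$, which is not integrable at $0$, while $m$ is integrable at $0$. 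Near $1$, $2b/a^2\approx -2\gamma/(\sigma^2(1-x)^2)$, so $s'$ blows up like $\exp(2\gamma/(\sigma^2(1-x)))$, and both conditions hold there as well. Positive recurrence of a regular one-dimensional diffusion is equivalent to positive Harris recurrence with invariant law $m/m(0,\NN)$. I would present this as the main argument and cite \eqref{eq:FL} for the integrability $\pi(f)<\infty$ that is needed in the ergodic theorem.
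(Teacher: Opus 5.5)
Your proposal is correct. Its first route (Foster--Lyapunov plus petite sets) mirrors the paper, but the argument you designate as main is genuinely different.

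\textbf{Comparison of routes.} The paper stays entirely within the Meyn--Tweedie framework. It takes the drift inequality \eqref{eq:FL} from the appendix and asserts that $b^\zeta$ and $a$ are smooth and uniformly elliptic on $C_\delta=[\delta,1-\delta]$. From this it infers a positive continuous transition density, hence a minorisation on $C_\delta$ (small set, Lebesgue-irreducibility, aperiodicity), and concludes with \cite[Theorem~5.1]{meyn1993stability}. You instead use the exact one-dimensional classification by scale function and speed measure, and your asymptotics are right. On $(0,\delta)$ the persistence condition gives $s'(x)\ge c\,x^{-2\kappa/\sigma^2}$ with $2\kappa/\sigma^2>1$. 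Here $\kappa$ is a lower bound for $\beta(1-\zeta(x))(1-x)-\gamma$ on $(0,\delta)$, available after shrinking $\delta$. This single exponent condition yields both $s(0+)=-\infty$ and integrability of $m$ at $0$, which is exactly where (A3) is used. Near $1$ the drift tends to $-\gamma$ while $a^2$ vanishes quadratically. So $s'$ grows like $\exp(c/(1-x))$ and $m$ is integrable there, uniformly in $\zeta$.

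\textbf{What each approach buys.} Your route needs only local integrability of $2b^\zeta/a^2$, so it covers discontinuous feedback such as $\zeta^\star=\tilde\zeta_{max}\mathbf 1_{\{x\ge x_*\}}$ with no extra work. You rightly flagged this issue. The paper's claim that $b^\zeta$ is smooth on $C_\delta$ is false for precisely this control, so its positive-density step needs separate justification, e.g.\ the one-dimensional theory you invoke, or Girsanov. Your route also gives non-explosion from Feller's test and the invariant law explicitly as $m/m(0,1)$. It gives $\pi(f)<\infty$ directly by integrating $|\log x|+|\log(1-x)|$ against $m$, so you do not need \eqref{eq:FL} even for that. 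The paper's approach, in turn, is dimension-free. Its drift inequality is uniform over $\mathcal E$, which is what one wants for uniform moment bounds and $f$-ergodic rates.

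\textbf{A slip near $x=1$.} Your sketch of \eqref{eq:FL} near $1$ contains an error, though it does not affect your main argument. Since $b^\zeta(x)=x(\beta(1-\zeta(x))(1-x)-\gamma)$, it is $b^\zeta$ itself that tends to $-\gamma$, while $b^\zeta/(1-x)\to-\infty$. The leading term of $b^\zeta V'$ is therefore $\gamma x\log(1-x)/(1-x)$, not $\gamma\log(1-x)$. Moreover, the diffusion part is not bounded near $1$: $\tfrac12 a^2V''$ contains $\tfrac12\sigma^2x^2(1-\log(1-x))$. The drift part $b^\zeta V'$ also contains $-\beta(1-\zeta(x))x\log(1-x)\ge 0$. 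Both are positive and of order $|\log(1-x)|$. With your bookkeeping, the sign would hinge on comparing $\gamma$ with $\beta(1-\zeta)+\sigma^2/2$. It would fail, e.g., for $\zeta\equiv 0\in\mathcal E$, since (A2) gives $\beta>\gamma$. The inequality holds only because the true negative term is of order $|\log(1-x)|/(1-x)$. Your scale computation $2b^\zeta/a^2\approx -2\gamma/(\sigma^2(1-x)^2)$ uses the correct behaviour, so the two parts of your write-up are inconsistent. The paper's remark that $b^\zeta$ vanishes at both boundaries makes the same mistake.
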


\begin{proof}
Non--explosion holds by \eqref{eq:non-explosion}.
Moreover, the Foster--Lyapunov drift inequality \eqref{eq:FL}
holds by construction.

It remains to verify the irreducibility and petite--set conditions
required to apply \cite[Theorem~5.1]{meyn1993stability}.
Fix $\delta\in(0,\frac12)$ and define
\[
C_\delta := [\delta,\NN-\delta].
\]
On $C_\delta$ the coefficients $b^\zeta$ and $a$ are smooth and
\[
\inf_{x\in C_\delta} a(x)
= \sigma \delta (\NN-\delta)
>0,
\]
so the diffusion is uniformly elliptic on $C_\delta$.

Hence, for every $t>0$ the transition kernel admits a jointly
continuous density $p_t(x,y)$ which is strictly positive on
$C_\delta\times C_\delta$.
Consequently, there exist $t_0>0$ and $\varepsilon_\delta>0$
such that
\[
P^{t_0}(x,A)
\ge
\varepsilon_\delta\,\lambda(A\cap C_\delta),
\qquad
x\in C_\delta,\;
A\in\mathcal B((0,\NN)).
\]
Thus $C_\delta$ is a small set and therefore petite for the
skeleton chain at step $t_0$.
The same minorisation implies $\psi$--irreducibility
(with $\psi=\lambda|_{(0,\NN)}$) and aperiodicity.

Together with non--explosion and the drift condition
\eqref{eq:FL}, \cite[Theorem~5.1]{meyn1993stability}
implies that $I^\zeta$ is positive Harris recurrent and
admits a unique invariant probability measure $\pi^\zeta$.
Uniqueness follows from $\psi$--irreducibility.

\end{proof}

\paragraph{Negligibility of the boundary term.}
We next show that the boundary term $\E[\phi(I_T^\zeta)]/T$ vanishes as $T\to\infty$.

\begin{lemma}\label{lm:lta}
Let $\phi$ be the candidate value function constructed in Section~\ref{sec:hjb}
and let $\zeta$ be any admissible control. Then
\[
\limsup_{T\to\infty}\frac1T \E_x[\phi(I_T^\zeta)] \le 0,
\qquad x\in(0,\NN).
\]
\end{lemma}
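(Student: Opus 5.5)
The plan is to reduce the claim to a sublinear growth estimate for $\E_x[-\log I_T^\zeta]$, and then to obtain that estimate from Itô's formula for $-\log I_t^\zeta$.

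\textbf{Reduction to a logarithmic moment.} By Corollary~\ref{corollary phi} there are constants $A',K'>0$ with $\phi(y)\le A'-K'\log y$ for all $y\in(0,\NN)$. Here $K'=-\tfrac32\psi_0>0$, because $\psi_0<0$ under \eqref{A2}. Consequently
\[
\frac1T\E_x[\phi(I_T^\zeta)]\le \frac{A'}{T}+\frac{K'}{T}\E_x\big[-\log I_T^\zeta\big].
\]
The first term vanishes as $T\to\infty$. It therefore suffices to show that $\limsup_{T\to\infty}\frac1T\E_x[-\log I_T^\zeta]\le 0$. Only an upper bound is needed: near $\NN$ the function $\phi$ is bounded by Corollary~\ref{corollary phi prime}, so the behaviour at the upper boundary is harmless.

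\textbf{Itô's formula for $-\log I$.} Write $I=I^\zeta$. Itô's formula gives
\[
d(-\log I_t)=h(I_t,\zeta_t)\,dt-\sigma(\NN-I_t)\,dW_t,\qquad h(x,\zeta):=\gamma-\beta(1-\zeta)(\NN-x)+\tfrac12\sigma^2(\NN-x)^2 .
\]
The integrand of the martingale part is bounded by $\sigma$, so it is a true martingale. Non-explosion \eqref{eq:non-explosion}, together with localisation by the exit times $\tau_m$, removes any issue of integrability. Hence
\[
\E_x[-\log I_T]=-\log x+\E_x\int_0^T h(I_s,\zeta_s)\,ds .
\]
The quantity to control is therefore the time-averaged expected value of $h$. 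One has $h(x,\zeta)\le\gamma+\tfrac12\sigma^2$ uniformly, but this crude bound only yields linear growth.

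\textbf{Main obstacle.} The difficulty is to show that the long-run average of $\E_x[h(I_s,\zeta_s)]$ is at most $0$, i.e.\ that the process cannot, on average, push $-\log I$ upward at a linear rate. The first attempt will split $h$ according to whether $I_s\ge\delta$ or $I_s<\delta$.
\begin{itemize}
\item On $\{I_s\ge\delta\}$, the contribution to $-\log I_T$ is bounded by $|\log\delta|$. It can be absorbed by working with the recurrence structure, via excursion decomposition of $-\log I$ between the levels $\delta$ and $\delta/2$.
\item On $\{I_s<\delta\}$, the drift $h(x,\zeta)$ is close to $\gamma+\tfrac12\sigma^2-\beta(1-\zeta)$, which is negative exactly when $\beta(1-\zeta)>\gamma+\tfrac12\sigma^2$, by \eqref{A2}.
\end{itemize}

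If this excursion argument goes through, $-\log I$ has negative drift above a fixed level. It then behaves like a reflected process with bounded positive drift at most $\gamma+\tfrac12\sigma^2$ only in a bounded region, and its expectation grows at most like $O(1)$ per regeneration cycle.

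I expect this step to be the crux. The sign of $h$ near $0$ depends on how much intervention $\zeta$ applies there. If $\zeta$ is close to $\zm$ near the boundary, $h$ can be strictly positive, and $-\log I_T$ could then grow linearly; this is the extinction regime discussed in Section~\ref{sbsec: extinction}. I would therefore first try to exploit that such controls incur average cost at least $c_2(\p)>\eta$ near the boundary and are thus irrelevant for the verification inequality. Should the statement need to be applied to exactly such $\zeta$, this is the step at which a persistence restriction like \eqref{eq:persistence-near-0} would have to enter, and I would check that point carefully.
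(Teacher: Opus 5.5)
Your reduction to $\E_x[-\log I_T^\zeta]$ and the identity $\E_x[-\log I_T]=-\log x+\E_x\int_0^T h(I_s,\zeta_s)\,ds$ are correct. The gap is that the proposal stops exactly where the proof has to happen. Nothing you write shows that the time average of $\E_x[h(I_s,\zeta_s)]$ is asymptotically nonpositive. The appeal to \eqref{A2} only gives $h(0,\zeta)<0$ for $\zeta<\p$. Even under a sign condition near $0$, knowing $h\le-\varepsilon$ below a level and $h\le\gamma+\tfrac12\sigma^2$ above it does not bound $\frac1T\E_x\int_0^T h\,ds$ unless you control occupation times. That is precisely what your unspecified excursion argument would have to do.

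Your suspicion about general controls is correct and can be made sharp: the lemma is false for arbitrary admissible controls. The constant control $\zeta\equiv\zm$ is admissible, and $h(\cdot,\zm)=H_{\zm}\ge m_{\tilde H}>0$ by \eqref{A0}. Your own identity then gives $\E_x[-\log I_T]\ge-\log x+m_{\tilde H}T$. The lower bound $\phi(y)\ge-A-K\log y$ of Corollary~\ref{corollary phi}, with $K=-\psi_0/2>0$, then yields $\liminf_{T\to\infty}\frac1T\E_x[\phi(I_T)]\ge K\,m_{\tilde H}>0$. A restriction such as \eqref{eq:persistence-near-0} is therefore unavoidable. It is also sufficient, because Theorem~\ref{thm:main-bb} only takes the infimum over $\mathcal E$. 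Discarding extinction-type controls through their cost, as you suggest, would not prove the lemma, which concerns $\phi(I_T)$ rather than costs.

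The idea that closes the gap on $\mathcal E$, and the one the paper uses, is to work with $\log^2$ instead of $\log$. One has $\A^\zeta\big(\tfrac12\log^2x\big)=|\log x|\,h(x,\zeta)+\tfrac12\sigma^2(\NN-x)^2$, while $\A^\zeta\big(\tfrac12\log^2(\NN-x)\big)$ is bounded above uniformly in $\zeta$. Hence $V(x)=\tfrac12\log^2x+\tfrac12\log^2(\NN-x)$ satisfies $\A^\zeta V\le d$ as soon as $h(x,\zeta(x))\le 0$ near $0$, which \eqref{eq:persistence-near-0} guarantees. Dynkin's formula then gives $\E_x[V(I_T)]\le V(x)+dT$, and Jensen gives $\E_x|\log I_T|\le 2\sqrt{V(x)+dT}=o(T)$. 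A one-sided drift bound thus suffices; neither the sign of the averaged $h$ nor any regeneration structure is needed.

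Within your own framework there is an equally short route. Since $\A^\zeta x^{-\lambda}=\lambda x^{-\lambda}\big(h(x,\zeta)+\tfrac{\lambda}{2}\sigma^2(\NN-x)^2\big)$, for small $\lambda>0$ under \eqref{eq:persistence-near-0} you get $\A^\zeta x^{-\lambda}\le -c\,x^{-\lambda}+d$. This yields $\sup_T\E_x[I_T^{-\lambda}]<\infty$ and hence $\sup_T\E_x[-\log I_T]<\infty$.

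Finally, the paper asserts $\A^\zeta V\le d$ uniformly over all $\zeta\in[0,1]$. By the identity above this fails whenever $\beta(1-\zeta)<\gamma+\tfrac12\sigma^2$ near $0$. The paper's argument is therefore valid exactly on $\mathcal E$, which is the point you flagged.
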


\begin{proof}
By Corollary~\ref{corollary phi} there exist constants $A,K>0$ such that
\[
|\phi(x)|\le A+K\big(|\log x|+|\log(\NN-x)|\big),
\qquad x\in(0,\NN).
\]
Let $V(x):=\frac12\log^2x+\frac12\log^2(\NN-x)$. Since
$|\log x|+|\log(\NN-x)|\le 2\sqrt{V(x)}$, we obtain
\begin{equation}\label{eq:phi-by-V-limsup}
\phi(x)\le A+2K\sqrt{V(x)},\qquad x\in(0,\NN).
\end{equation}
Moreover, the moment estimate \eqref{eq:linear-moment-app} in the appendix yields
\begin{equation}\label{eq:V-linear-limsup}
\E_x[V(I_T^\zeta)]\le V(x)+dT,\qquad T\ge0.
\end{equation}
Using \eqref{eq:phi-by-V-limsup}, Jensen's inequality and \eqref{eq:V-linear-limsup},
\[
\frac1T\,\E_x[\phi(I_T^\zeta)]
\le
\frac{A}{T}+\frac{2K}{T}\,\E_x\!\big[\sqrt{V(I_T^\zeta)}\big]
\le
\frac{A}{T}+\frac{2K}{T}\sqrt{\E_x[V(I_T^\zeta)]}
\le
\frac{A}{T}+\frac{2K}{T}\sqrt{V(x)+dT}.
\]
The right-hand side converges to $0$ as $T\to\infty$, hence
$\limsup_{T\to\infty}\frac1T \E_x[\phi(I_T^\zeta)]\le 0$.
\end{proof}

\noindent
\textbf{Acknowledgements}
\\
\\
MB and CP acknowledge support by the German Research Foundation
(DFG), Project ID: 443227151, in addition MB has been supported by the EPSRC Grant EP/W006227/1.

    \bibliography{refs}
\bibliographystyle{plain}
\appendix

\section{Auxiliary calculations}\label{app:matching}

\subsection{Explicit form of the matching equation}\label{app:matching-xstar}

By unraveling the definitions of $F$, $f_0$, $f_1$ and $g$, the condition
$\eta_-(x_*) = \eta_+(x_*)$ from Section~\ref{sec:hjb} can be written
equivalently as  
\begin{multline}\label{eq:matching-xstar}
    \frac{
    \frac{c_2(\zm)}{\beta \zm}
    + \int_{0}^{x_*}
    \frac{c_1(y)}
    {\frac{1}{2}(\NN-y)^2\sigma^2 y}
    \exp\!\Bigg(
        \int_{y}^{x_*}
        \frac{
        \frac{1}{2}(\NN-z)^2\sigma^2
        + \gamma
        - \beta(\NN-z)}
        {\frac{1}{2}(\NN-z)^2\sigma^2 z}
        \,dz
    \Bigg)\,dy
    }
    {
    \int_{0}^{x_*}
    \frac{1}
    {\frac{1}{2}(\NN-y)^2\sigma^2 y}
    \exp\!\Bigg(
        \int_{y}^{x_*}
        \frac{
        \frac{1}{2}(\NN-z)^2\sigma^2
        + \gamma
        - \beta(\NN-z)}
        {\frac{1}{2}(\NN-z)^2\sigma^2 z}
        \,dz
    \Bigg)\,dy
    }
    \\
    =
    \frac{
    -\frac{c_2(\zm)}{\beta \zm}
    + \int_{x_*}^{\NN}
    \frac{c_1(y)+(\NN-y)c_2(\zm)}
    {\frac{1}{2}(\NN-y)^2\sigma^2 y}
    \exp\!\Bigg(
        -\int_{x_*}^{y}
        \frac{
        \frac{1}{2}(\NN-z)^2\sigma^2
        + \gamma
        - \beta(1-\zm)(\NN-z)}
        {\frac{1}{2}(\NN-z)^2\sigma^2 z}
        \,dz
    \Bigg)\,dy
    }
    {
    \int_{x_*}^{\NN}
    \frac{1}
    {\frac{1}{2}(\NN-y)^2\sigma^2 y}
    \exp\!\Bigg(
        -\int_{x_*}^{y}
        \frac{
        \frac{1}{2}(\NN-z)^2\sigma^2
        + \gamma
        - \beta(1-\zm)(\NN-z)}
        {\frac{1}{2}(\NN-z)^2\sigma^2 z}
        \,dz
    \Bigg)\,dy
    }.
\end{multline}

\section{Extinction for $\zeta \equiv\p$}\label{appendix: extinction-pmax}
In this section we will discuss the limit behavior of the infection $I$ if $\zeta \equiv p_{max}$ (recall $p_{max}$ from \eqref{eq p max}).
As in Section \ref{sbsec: extinction}, setting $\zeta\equiv p \in (0,1)$ yields
\begin{align}\label{extinction: SDE}
    dI_t^p \;=\; \big(\beta(1-p)(\NN-I_t^p)I_t^p - \gamma\,I_t^p\big)\,dt \;+\; \sigma (\NN-I_t^p)I_t^p\,dW_t.
\end{align}

The stochastic reproduction number under $\zeta\equiv p$ is defined as
\[
R_p^{S}\;:=\;\frac{\beta(1-p)}{\gamma}\;-\;\frac{\sigma^2}{2\gamma}.
\]

Now let $\zeta \equiv p_{max}$. This case corresponds to the critical boundary $R_p^S = 1$ where extinction still occurs, i.e. $I_t \to 0$ a.s. for $t\to\infty$, but without exponential decay.
Then we have for 
\begin{align}\label{extinction: pmax case 1}
    \sigma^2 < \beta_{p_{max}}  = \gamma + \frac{\sigma^2}{2}
\end{align}
that the underlying diffusion is given by $(I_t^{\p})_{t\geq 0}$ as in \eqref{extinction: SDE}.

Applying It\^o to $\log I_t^p$ yields
\begin{align*}
    d\log I_t^{\p} &= \left(\left( \gamma + \frac{\sigma^2}{2}\right)\left(1 - I_t^{\p} \right)  - \gamma - \frac{\sigma^2}{2}\left(1 - I_t^{\p} \right)^2\right)dt + \sigma\left(1 - I_t^{\p} \right)dW_t \\
    &=I_t^{\p} \left( \frac{\sigma^2}{2} \left(1 - I_t^{\p} \right) - \gamma \right)dt + \sigma\left(1 - I_t^{\p} \right)dW_t.
\end{align*}
Since \eqref{extinction: pmax case 1} implies $\sigma^2/2 - \gamma < 0$ we have 
\begin{align}\label{extinction: drift term pmax}
    \frac{\sigma^2}{2} \left(1 - I_t^{\p} \right) - \gamma \leq \frac{\sigma^2}{2}  -\gamma < 0
\end{align}
and $(\log I_t^{\p})_{t\geq 0}$ is a supermartingale. The diffusion lives in the interval $(0,1)$, hence $\log I_t^{\p} \leq 0$ and by Doob's convergence theorem there exists a random variable $L_\infty^{\p} \in [-\infty, 0]$ such that 
\[
    \log I_t^{\p} \to L_\infty^{\p} \;\;a.s.
\]
Suppose the event $B = \{L_\infty^{\p} > -\infty\}$ occurs with non-zero probability. Let $\omega\in B$ such that $\log I_t^{\p}(\omega) \to L_\infty^{\p}(\omega)$ and set $I_\infty^{\p} := \exp(L_\infty^{\p})$. Since 
\[
I_t^{\p}(\omega) \to I_\infty^{\p}(\omega)>0 \;\; a.s.
\]
for $t\to \infty$, there exists an $\varepsilon >0$ and $s>0$ such that 
\[
I_t^{\p}(\omega) >\varepsilon
\]
for all $t > s$. Let 
\[
\delta := -\varepsilon \left( \frac{\sigma^2}{2} \left(1 - \varepsilon \right) - \gamma \right)>0.
\]
The drift term
\[
I_t^{\p}(\omega) \left( \frac{\sigma^2}{2} \left(1 - I_t^{\p}(\omega) \right) - \gamma \right) \leq I_t^{\p}(\omega) \left( \frac{\sigma^2}{2} \left(1 - \varepsilon \right) - \gamma \right) \leq -\delta < 0
\]
of $(\log I_t^{\p}(\omega))_{t\geq 0} $ is negative, hence 
\[
\int_0^t I_s^{\p}(\omega) \left( \frac{\sigma^2}{2} \left(1 - I_s^{\p}(\omega) \right) - \gamma \right)ds \leq -\delta t + C
\]
for some constant $C>0$. Now 
\[
M_t := \int_0^t \sigma(1 -  I_s^{\p})dW_s
\]
is a continuous martingale with quadratic variation given by
\[
[M]_t = \int_0^t\sigma^2 (1-I_s^{\p})^2ds \leq \sigma^2 t. 
\]
Then $M_t/t \to 0$ a.s. for $t\to \infty$ and since the drift tends to $-\infty$ linearly in $t$ we have that $\log I_t^{\p}(\omega) \to -\infty$ a.s. as $t\to \infty$, a contradiction to $-\infty < L_\infty^{\p}(\omega)$. Hence $\log I_t^{\p} \to -\infty$ a.s. and therefore $I_t^{\p} \to 0$ a.s. As mentioned before, for $\zeta \equiv p > \p$ the infection $I^\zeta$ also goes extinct, however exponentially quickly. This exponential decay is not given in the case were $\zeta \equiv \p$.

    \section{Ergodic results}\label{app:ergodicity}

This appendix provides the missing computations used in Subsection~\ref{subsec: erg-res}.
Throughout, $\zeta:(0,\NN)\to[0,1]$ is an arbitrary admissible feedback control
and $I^\zeta$ solves \eqref{eq:SDE-erg}.

\subsection{Non-explosion}
\label{app:non-expl}
The diffusion part satisfies $\frac12\sigma^2\big((1-x)^2+x^2\big)\le \frac12\sigma^2$ for all $x\in(0,1)$.
For the drift part, set
\[
A(x):=\beta(1-\zeta(x))(1-x)-\gamma,\qquad
B(x):=-1+\frac{x}{1-x}=\frac{2x-1}{1-x}.
\]
Using the identity
\[
A(x)B(x)=\beta(1-\zeta(x))(2x-1)\;-\;\gamma\,\frac{2x-1}{1-x},
\]
we bound $A(x)B(x)$ separately on $(0,\frac12]$ and $(\frac12,1)$:

If $x\in(0,\frac12]$, then $2x-1\le 0$, hence $\beta(1-\zeta(x))(2x-1)\le 0$. Moreover,
  $\frac{2x-1}{1-x}\in[-1,0]$, so
  \[
  -\gamma\,\frac{2x-1}{1-x}\le \gamma.
  \]
  Thus $A(x)B(x)\le \gamma$ on $(0,\frac12]$.

If $x\in(\frac12,1)$, then $2x-1>0$ and $\frac{2x-1}{1-x}>0$, so
  $-\gamma\,\frac{2x-1}{1-x}\le 0$, and therefore
  \[
  A(x)B(x)\le \beta(1-\zeta(x))(2x-1)\le \beta.
  \]

Combining both cases gives the \emph{uniform} bound
\[
A(x)B(x)\le \max\{\beta,\gamma\},\qquad x\in(0,1).
\]
Hence, from \eqref{eq:AV0-checked},
\begin{equation*}
\A^\zeta V_0(x)\le d_0:=\max\{\beta,\gamma\}+\frac12\sigma^2,
\qquad x\in(0,1),
\end{equation*}
uniformly over all admissible feedback controls $\zeta$.

\subsection{Drift estimate and moment bound for 
$V(x)=\frac12\log^2 x+\frac12\log^2(1-x)$}
\label{app:driftV}

Recall
\[
\mathcal A^\zeta V(x)
=
b^\zeta(x)V'(x)
+\frac12 a(x)^2 V''(x),
\qquad
b^\zeta(x)=\beta(1-\zeta(x))(1-x)x-\gamma x,
\quad
a(x)=\sigma(1-x)x.
\]
Moreover,
\[
V'(x)=\frac{\log x}{x}-\frac{\log(1-x)}{1-x},
\qquad
V''(x)=\frac{1-\log x}{x^2}
+\frac{1-\log(1-x)}{(1-x)^2}.
\]

\paragraph{Uniform upper bound.}

Using the explicit form of $\mathcal A^\zeta V(x)$
and the fact that $0\le\zeta\le1$,
one verifies that the singular terms in $1/x$
and $1/(1-x)$ cancel.
The remaining expression grows at most linearly
in $|\log x|$ and $|\log(1-x)|$.

Since $\log x\le0$ and $\log(1-x)\le0$ on $(0,1)$,
it follows that $\mathcal A^\zeta V(x)$
is bounded from above on $(0,1)$,
uniformly in $\zeta$.

Hence there exists $d>0$ such that
\begin{equation}\label{eq:Av-upper}
\mathcal A^\zeta V(x)\le d,
\qquad x\in(0,1),\ \zeta\in[0,1].
\end{equation}

\paragraph{Moment estimate.}

Let
\[
\tau_m:=\inf\{t\ge0:I_t^\zeta\notin[1/m,1-1/m]\}.
\]
Applying It\^o's formula to $V(I_{t\wedge\tau_m}^\zeta)$ yields
\[
\E_x[V(I_{t\wedge\tau_m}^\zeta)]
=
V(x)
+
\E_x\!\int_0^{t\wedge\tau_m}
\mathcal A^\zeta V(I_s^\zeta)\,ds.
\]

Using \eqref{eq:Av-upper},
\[
\E_x[V(I_{t\wedge\tau_m}^\zeta)]
\le
V(x)+dt.
\]

Letting $m\to\infty$ and using monotone convergence
(since $\tau_m\uparrow\infty$ a.s.\ by non-explosion)
yields
\begin{equation}\label{eq:linear-moment-app}
\E_x[V(I_t^\zeta)]
\le
V(x)+d t,
\qquad t\ge0.
\end{equation}

\paragraph{Negative drift near the boundary.}

We now impose Assumption~(A3), i.e.\ there exist
$\delta_0\in(0,\frac12)$ and $\underline\zeta<\zm$ such that
\[
\zeta(x)\le\underline\zeta
\quad\text{for }x\in(0,\delta_0),
\qquad
\beta(1-\underline\zeta)>\gamma+\frac{\sigma^2}{2}.
\]

\medskip

Fix $\delta\in(0,\delta_0)$.
For $x\in(0,\delta)$,
isolating the singular terms in $\log x$ yields
\[
\mathcal A^\zeta V(x)
=
\log x\big(\beta(1-\zeta(x))(1-x)-\gamma\big)
+
\frac12\sigma^2(1-x)^2(1-\log x)
+
O(1).
\]

Since $\zeta(x)\le\underline\zeta$,
\[
\mathcal A^\zeta V(x)
\le
-
\Big(
\beta(1-\underline\zeta)-\gamma-\frac{\sigma^2}{2}
\Big)|\log x|
+
C_\delta.
\]

By Assumption~(A3),
$\beta(1-\underline\zeta)-\gamma-\frac{\sigma^2}{2}>0$,
so there exists $c_0>0$ such that
\begin{equation}\label{eq:AV-neg0}
\mathcal A^\zeta V(x)
\le
- c_0 |\log x|,
\qquad x\in(0,\delta).
\end{equation}

A symmetric argument near $1$ yields
\begin{equation}\label{eq:AV-neg1}
\mathcal A^\zeta V(x)
\le
- c_1 |\log(1-x)|,
\qquad x\in(1-\delta,1).
\end{equation}

\paragraph{Global Foster--Lyapunov inequality (under (A3)).}

Combining the boundary estimates with boundedness
on the compact set $[\delta,1-\delta]$
yields constants $c,d>0$ such that
\begin{equation}\label{eq:FL-global}
\mathcal A^\zeta V(x)
\le
- c\big(|\log x|+|\log(1-x)|+1\big)
+
d\,\mathbf 1_{[\delta,1-\delta]}(x),
\qquad x\in(0,1),
\end{equation}
uniformly for all admissible controls satisfying (A3).

\section{Proofs for Section~\ref{sec:qualitative}}\label{app:qualitative}

In this appendix we provide the proofs of Propositions~\ref{prop:beta_crit_gray}
and~\ref{prop:sigma_crit_gray} from Section~\ref{sec:qualitative}.
Throughout we work in the notation of Section~\ref{sec:model}
and adopt the standing assumptions made there.

For convenience, we recall that under the no–intervention policy $\zeta\equiv0$
the infection process $I^0=(I_t^0)_{t\ge 0}$ solves
\begin{equation}\label{eq:app_uncontrolled}
  dI_t^0 \;=\; \bigl(\beta(\NN-I_t^0)I_t^0 - \gamma\,I_t^0\bigr)\,dt
  \;+\; \sigma(\NN-I_t^0)I_t^0\,dW_t.
\end{equation}
This coincides with the stochastic SIS model studied in \cite{gray2011stochastic}
after a trivial rescaling of the state variable. In particular, the extinction/persistence
threshold and the existence of a stationary distribution follow directly from
\cite[Theorems~4.1, 4.3, 5.1 and~6.2]{gray2011stochastic}.


\begin{proof}[Proof of Proposition~\ref{prop:beta_crit_gray}]
Assume $\beta>\beta_{\mathrm{crit}}$, i.e.\ $R_0^{S}(\beta)>1$.
Then, by \cite[Theorems~5.1 and~6.2]{gray2011stochastic}
(applied to \eqref{eq:app_uncontrolled} after rescaling the state variable to a proportion),
the diffusion $I^0$ is positive recurrent on $(0,\NN)$ and admits a unique invariant
probability measure $\pi_\beta$.
Moreover, \cite[Theorem~6.3]{gray2011stochastic} provides an explicit expression
for the stationary density $f_\beta$ of $\pi_\beta$ and shows that the stationary mean
\[
  m(\beta)
  := \int_{(0,\NN)} x\,\pi_\beta(dx)
  = \int_0^\NN x\,f_\beta(x)\,dx
\]
is finite and depends continuously on $R_0^{S}(\beta)$; see in particular
\cite[Eq.~(6.7) and Theorem~6.3]{gray2011stochastic}.
In consequence,
\begin{equation}\label{eq:app_mean_to_zero_beta}
  \lim_{R_0^{S}(\beta)\downarrow 1} m(\beta)=0,
  \qquad\text{equivalently}\qquad
  \lim_{\beta\downarrow\beta_{\mathrm{crit}}} m(\beta)=0.
\end{equation}

Under the no--intervention policy $\zeta\equiv0$ there are no intervention costs,
$c_2(0)=0$, and the running cost reduces to $C(x,0)=c_1(x)=\alpha x$.
By ergodicity of $I^0$ and \eqref{eq:app_mean_to_zero_beta},
\[
  \eta^0(\beta)
  := \lim_{T\to\infty}\frac1T\,
     \E\!\Big[\int_0^T c_1(I_t^0)\,dt\Big]
  = \int c_1(x)\,\pi_\beta(dx)
  = \alpha\, m(\beta)
  \xrightarrow[\beta\downarrow\beta_{\mathrm{crit}}]{} 0 .
\]

Now, let $\eta^\star(\beta)$ denote the optimal ergodic cost over all admissible controls.
Since the running cost satisfies $C(x,\zeta)\ge0$ for all $(x,\zeta)$, we have
\[
  \eta^\star(\beta)\ge0
  \qquad\text{for all }\beta.
\]
On the other hand, the no--intervention policy $\zeta\equiv0$ is admissible, so
\[
  \eta^\star(\beta)\le\eta^0(\beta)
  \qquad\text{for all }\beta>\beta_{\mathrm{crit}}.
\]
Hence,
\[
  0 \le \eta^\star(\beta) \le \eta^0(\beta),
  \qquad \beta>\beta_{\mathrm{crit}}.
\]
Letting $\beta\downarrow\beta_{\mathrm{crit}}$ and using $\eta^0(\beta)\to0$
from Step~2, the squeeze theorem yields
\[
  \lim_{\beta\downarrow\beta_{\mathrm{crit}}}\eta^\star(\beta)=0,
\]
and moreover
\[
  0\le \bigl|\eta^\star(\beta)-\eta^0(\beta)\bigr|
  \le \eta^0(\beta)
  \xrightarrow[\beta\downarrow\beta_{\mathrm{crit}}]{}0.
\]
This proves the asserted asymptotic optimality of the no--intervention policy.
\end{proof}

\begin{proof}[Proof of Proposition~\ref{prop:sigma_crit_gray}]
The proof is completely analogous to that of
Proposition~\ref{prop:beta_crit_gray}.
The only difference is that the stochastic reproduction number
\(
R_0^{S} = \beta/\gamma - \sigma^2/(2\gamma)
\)
is now varied through the noise intensity $\sigma$ rather than the transmission rate $\beta$.

By \cite[Theorems~5.1, 6.2 and~6.3]{gray2011stochastic},
the stationary distribution of the uncontrolled SIS diffusion exists and
its mean depends continuously on $R_0^{S}(\sigma)$.
As $\sigma\uparrow\sigma_{\mathrm{crit}}$ one has $R_0^{S}(\sigma)\downarrow1$,
and the stationary mean converges to zero.
The remainder of the argument—identifying the ergodic cost under no intervention
and bounding the optimal cost via admissibility—is identical to the proof of
Proposition~\ref{prop:beta_crit_gray} and is therefore omitted.
\end{proof}

\end{document}